\title{Generators and Relations for Real Stabilizer Operators}
\author{Justin Makary, Neil J. Ross, and Peter Selinger
\institute{Department of Mathematics and Statistics}
\institute{Dalhousie University, Halifax, Canada}
}
\begin{document}
\maketitle

\begin{abstract}
Real stabilizer operators, which are also known as real Clifford operators, are generated, through composition and tensor product, by the Hadamard gate, the Pauli $Z$ gate, and the controlled-$Z$ gate. We introduce a normal form for real stabilizer circuits and show that every real stabilizer operator admits a unique normal form. Moreover, we give a finite set of relations that suffice to rewrite any real stabilizer circuit to its normal form.
\end{abstract}

\section{Introduction}
\label{sec:intro}

\emph{Stabilizer} operators, which are also known as \emph{Clifford}
operators, play a fundamental role in the study of fault-tolerant
quantum computation \cite{NielsenChuang}. Stabilizer operators are
generated, under composition and tensor product, by the scalar
$e^{i\pi/4}$, the Hadamard gate $H$, the phase gate $S$, and the
controlled-$Z$ gate $CZ$. For $n \geq 0$, the set of stabilizer
operators on $n$ qubits forms a subgroup of the unitary group
$\Uni{2^n}$ and is denoted $\clifford{n,\C}$. Quantum circuits for
stabilizer operators have been extensively studied
\cite{Aaronson2004ImprovedSO,Backens2014TheZI,
  Bravyi2020HadamardfreeCE,Gottesman1998TheHR,calderbank,
  Selinger15,Nest2010ClassicalSO}. In particular, \cite{Selinger15}
gave a finite presentation of stabilizer operators by introducing a
normal form for stabilizer circuits together with a finite collection
of relations that suffice to rewrite any stabilizer circuit to its
normal form.

In the present paper, we study \emph{real} stabilizer operators which
are generated by the scalar $(-1)$ and the gates $H$, $Z$, and
$CZ$. The group of $n$-qubit real stabilizer operators
$\clifford{n,\R}$ is the intersection of $\clifford{n,\C}$ and the
orthogonal group $\Ortho{2^n}$.

Our contributions are as follows. We define a normal form for real
stabilizer circuits and we prove that every real Clifford operator
admits a unique normal form. We then introduce a finite collection of
relations between real stabilizer circuits and show that the relations
are complete. The completeness of the relations is established by
formulating a rewrite system to transform any real stabilizer circuit
into its normal form. Our work follows the methods of
\cite{Selinger15} but the focus on real operators requires a distinct
notion of normal form. In order to conveniently describe these normal
forms, we introduce a \emph{typing} for quantum circuits.

Restrictions such as the one considered here were previously studied
in the context of randomized benchmarking \cite{Hashagen2018RealRB},
graphical languages \cite{BK2019,Vilmart2018AZW}, and exact synthesis
\cite{Amy2020NumberTheoreticCO}. Real stabilizers were explicitly
investigated in \cite{Comfort19,DP3,Hashagen2018RealRB,Nebe00}. In
particular, \cite{Comfort19} provides a complete set of circuit
equalities for real stabilizer circuits with ancillas. The presence of
ancillas, however, implies that the circuits discussed in
\cite{Comfort19} do not always correspond to orthogonal operators. In
contrast, the circuits discussed here always represent orthogonal
operators.

The paper is organized as follows. In \cref{sec:groups}, we examine
the structure of the real Pauli and Clifford groups. In
\cref{sec:circuits}, we review the diagrammatic language of quantum
circuits and introduce annotated and typed circuits. In \cref{sec:nf},
we define normal forms and prove that every real stabilizer operator
admits a unique normal form. We state our relations in \cref{sec:rels}
and propose a system for rewriting any real stabilizer circuit to its
normal form. We discuss future work in \cref{sec:conc}.

\section{The Real Pauli and Clifford Groups}
\label{sec:groups}

We denote the transpose of the matrix $A$ by $A^\intercal$. A matrix
$A$ is symmetric if $A=A^\intercal$ and orthogonal if
$A^{-1}=A^\intercal$. Following \cite{Selinger15}, for two matrices
$A$ and $B$, we write $A\bullet B$ for $ABA^{-1}$. Throughout, we use
the terms ``operator'' and ``matrix'' interchangeably, assuming that
operators are always represented with respect to the standard
(computational) basis.

The Pauli matrices $X$ and $Z$, the Hadamard matrix $H$ and the
controlled-$Z$ matrix $CZ$ are defined as
\[
X= 
\begin{bmatrix} 
  0 & 1 \\
  1 & 0 
\end{bmatrix}, 
\quad
Z= \begin{bmatrix}
  1 & 0 \\
  0 & -1
\end{bmatrix},
\quad
H= \frac{1}{\sqrt{2}}
\begin{bmatrix} 
  1 & 1 \\
  1 & -1 
\end{bmatrix},
\quad \mbox{and} \quad 
CZ = \begin{bmatrix} 
  1 & 0 & 0 & 0 \\
  0 & 1 & 0 & 0 \\
  0 & 0 & 1 & 0 \\
  0 & 0 & 0 & -1 \\
\end{bmatrix}.
\]
We note that $X$ and $Z$ are orthogonal and symmetric so that $X^{2} =
Z^{2} = I$. Moreover, $X$ and $Z$ anticommute: $XZ = -ZX$. This
implies that $(XZ)^{2} = -1$ so that $XZ$ is orthogonal but not
symmetric.

\begin{definition}
  \label{def:pauligroup}
  The \emph{real Pauli group on $n$ qubits} $\pauli{n,\R}$ is defined
  as
  \[
  \pauli{n,\R}=\s{\pm (P_1\otimes \ldots \otimes P_n) \mid P_i \in
    \s{I, X, Z, XZ}}.
  \]
\end{definition}  

In what follows, we drop the adjective ``real'' and simply refer to
$\pauli{n,\R}$ as the Pauli group. In addition, we write $\pauli{n}$
for $\pauli{n,\R}$. We note that the $n$-qubit Pauli group $\pauli{n}$
spans the vector space of real $2^n\times 2^n$ matrices. The
proposition below records an important property of Pauli operators.

\begin{proposition}
  \label{prop:psquare}
  Let $P=(-1)^a(P_1\otimes\ldots\otimes P_n)$ with $P_i\in\s{I, X, Z,
    XZ}$. Then $P^2=I$ if and only if there are evenly many $i$ such
  that $P_i=XZ$.
\end{proposition}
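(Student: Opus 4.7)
The plan is to reduce the question to a direct computation on the tensor factors, using the mixed-product property of tensor products together with the facts already recorded before the statement.

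First I would square $P$ and use the fact that $((-1)^a)^2 = 1$ and that $(A_1 \otimes \cdots \otimes A_n)(B_1 \otimes \cdots \otimes B_n) = (A_1 B_1) \otimes \cdots \otimes (A_n B_n)$. This gives
\[
P^2 \;=\; P_1^2 \otimes \cdots \otimes P_n^2.
\]
Next I would invoke the identities recorded in the paragraph preceding the statement: $I^2 = X^2 = Z^2 = I$ and $(XZ)^2 = -I$ (the latter coming from $XZ = -ZX$). Hence for each tensor factor,
\[
P_i^2 \;=\; \begin{cases} I & \text{if } P_i \in \{I, X, Z\}, \\ -I & \text{if } P_i = XZ. \end{cases}
\]

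Letting $k$ denote the number of indices $i$ for which $P_i = XZ$, collecting the scalar $-1$'s out of the tensor product then yields $P^2 = (-1)^k I$. Therefore $P^2 = I$ if and only if $(-1)^k = 1$, i.e.\ if and only if $k$ is even, which is exactly the statement. There is no real obstacle here: the only mildly subtle point is the pull-out of scalars from a tensor product, which is routine, and the sign $(-1)^a$ disappearing upon squaring.
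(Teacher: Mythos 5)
Your proof is correct and follows essentially the same route as the paper's: square $P$ factor-wise using the mixed-product property, note that $(XZ)^2=-I$ while the other factors square to $I$, and collect the signs to get $P^2=(-1)^kI$. Your write-up is merely a more explicit version of the paper's argument.
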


\begin{proof}
If $P_i\in\s{I, X, Z}$ then $P_i^2=I$ and if $P_i=XZ$ then
$P_i^2=-1$. Hence, for any $P\in\pauli{n}$, $P_i^2=(-1)^dI$ where $d$
is the number of components for which $P_i=XZ$. Thus, $P^2=I$ if and
only if $d$ is even.
\end{proof}

\begin{definition}
  \label{def:clifford}
  The \emph{real Clifford group on $n$ qubits} $\clifford{n,\R}$ is
  the normalizer of $\pauli{n}$ in $\Ortho{2^n}$. That is,
  \[
    \clifford{n,\R} = \{U \in \Ortho{2^n} \ | \ U \bullet P \in
    \pauli{n} \mbox{ for all } P \in \pauli{n}\}.
  \]
\end{definition}

As with the Pauli group, we drop the adjective ``real'' when referring
to $\clifford{n,\R}$ in what follows and, for brevity, write
$\clifford{n}$ for $\clifford{n,\R}$. Since the Clifford group is the
normalizer of the Pauli group, we have that $C \bullet P \in
\pauli{n}$ for every Clifford $C$ and every Pauli $P$. Furthermore,
conjugation is a group automorphism of $\pauli{n}$. Note that $H \in
\clifford{1}$, $CZ \in \clifford{2}$, and $\pauli{n} \subseteq
\clifford{n}$.

\begin{proposition}
  \label{prop:scalar}
  Let $C \in \clifford{n}$. If $C \bullet P = P$ for all $P \in
  \pauli{n}$, then $C= \pm 1$.
\end{proposition}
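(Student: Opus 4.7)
The plan is to exploit the spanning property of $\pauli{n}$ that was recorded just after \cref{def:pauligroup}: the group $\pauli{n}$ spans the real vector space $\matrices{2^n \times 2^n}(\R)$. The hypothesis $C \bullet P = P$ for all $P \in \pauli{n}$ is equivalent to $CP = PC$ for every Pauli operator $P$, so the task reduces to upgrading ``commutes with every Pauli'' to ``commutes with every real matrix,'' and then pinning down the orthogonal scalars.

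First, I would unfold $C \bullet P = P$ as $CPC^{-1} = P$, i.e.\ $CP = PC$ for every $P \in \pauli{n}$. Next, I would invoke linearity: any real $2^n \times 2^n$ matrix $M$ can be written as $M = \sum_j \lambda_j P_j$ with $\lambda_j \in \R$ and $P_j \in \pauli{n}$, so $CM = \sum_j \lambda_j CP_j = \sum_j \lambda_j P_j C = MC$. Thus $C$ is in the center of $\matrices{2^n \times 2^n}(\R)$.

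The standard consequence (a direct computation using the matrix units $E_{ij}$, or a one-line appeal to Schur's lemma for the defining representation of the full matrix algebra) is that any matrix commuting with all of $\matrices{2^n \times 2^n}(\R)$ must be a real scalar multiple of the identity: $C = \lambda I$ for some $\lambda \in \R$. Finally, since $C \in \clifford{n} \subseteq \Ortho{2^n}$, orthogonality forces $\lambda^2 = 1$, hence $C = \pm I = \pm 1$.

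No step here is genuinely hard; the main thing to be careful about is to work over $\R$ throughout (so that the $\pm 1$ comes from orthogonality rather than from, say, the unit complex numbers one would get in the complex Clifford case). The spanning property of $\pauli{n}$ over $\R$ was already stated in the paper, so this is the single crucial input and the remainder is routine linear algebra.
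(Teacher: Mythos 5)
Your proposal is correct and follows essentially the same route as the paper's proof: use the fact that $\pauli{n}$ spans the real $2^n \times 2^n$ matrices to upgrade commutation with every Pauli to commutation with every real matrix, conclude that $C$ is a scalar, and then use orthogonality to pin the scalar down to $\pm 1$. The only cosmetic difference is that you spell out the ``center of the matrix algebra is the scalars'' step (via matrix units or Schur's lemma), which the paper takes for granted.
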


\begin{proof}
By assumption, $CPC^{-1} = P$, for all $P \in \pauli{n}$. Since
$\pauli{n}$ spans the space of $2^{n} \times 2^{n}$ real matrices, it
follows that for any $2^{n} \times 2^{n}$ operator $N$, $CNC^{-1} =
N$. Thus, $C$ commutes with every real matrix and is therefore a
scalar. Because the only scalars in $\Ortho{2^n}$ are $\pm 1$, we get
$C=\pm 1$.
\end{proof}

\begin{corollary}
If $C$ and $D$ are two elements of $\clifford{n}$ that act identically
on $\pauli{n}$, then $C=\pm D$.
\end{corollary}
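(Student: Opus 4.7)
The plan is to reduce the statement directly to \cref{prop:scalar} by considering the product $D^{-1}C$. Since $C$ and $D$ act identically on $\pauli{n}$ by conjugation, we have $CPC^{-1}=DPD^{-1}$ for every $P\in\pauli{n}$, and rearranging this gives $(D^{-1}C)P(D^{-1}C)^{-1}=P$, i.e.\ $(D^{-1}C)\bullet P = P$ for all $P\in\pauli{n}$.

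Next, I would note that $\clifford{n}$ is a group (it is defined as the normalizer of $\pauli{n}$ in $\Ortho{2^n}$, which is automatically a subgroup), so $D^{-1}C\in\clifford{n}$. Applying \cref{prop:scalar} to $D^{-1}C$ yields $D^{-1}C=\pm 1$, and left-multiplying by $D$ gives $C=\pm D$.

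I don't anticipate any real obstacle here: the only subtle point is verifying that $D^{-1}C$ actually lies in $\clifford{n}$, which is immediate from the group structure, and that the relation ``acts identically on $\pauli{n}$'' really does mean conjugation agrees pointwise on $\pauli{n}$, which is clear from the context of \cref{def:clifford}. So the proof should be just a short two- or three-line application of the previous proposition.
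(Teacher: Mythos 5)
Your proposal is correct and is essentially identical to the paper's proof: both apply \cref{prop:scalar} to $D^{-1}C$ after observing that it fixes every element of $\pauli{n}$ under conjugation. The remark that $D^{-1}C\in\clifford{n}$ because the normalizer is a subgroup is a fine (if implicit in the paper) touch, but nothing of substance differs.
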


\begin{proof}
Since $C$ and $D$ act identically on $\pauli{n}$, we have $(D^{-1}C)
\bullet P = D^{-1} \bullet C \bullet P = D^{-1}\bullet D \bullet P =
P$. Thus, by \cref{prop:scalar}, $D^{-1}C = \pm 1$. Hence $C = \pm D$.
\end{proof}

\section{Annotated and Typed Circuits}
\label{sec:circuits}

We assume that the reader is familiar with the language of quantum
circuits \cite{NielsenChuang}. In this section, we introduce certain
decorations which will be convenient in discussing circuits.

The Hadamard gate, the Pauli $Z$ gate, and the controlled-$Z$ gates
are represented below.
\[
\raisebox{-0.4em}{
\begin{qcircuit}[scale=0.3]
  \blackgrid{0}{2.8}{0}
  \HGATE{1.4}{0}
\end{qcircuit}}
\qquad
\begin{qcircuit}[scale=0.3]
  \blackgrid{0}{2.8}{-0.5}
  \ZGATE{1.4}{-0.5}
\end{qcircuit}
\qquad
\raisebox{-0.75em}{
\begin{qcircuit}[scale=0.3]
    \blackgrid{0}{2.8}{0,1.6}
    \CZGATE{1.4}{1.6}{0}
\end{qcircuit}}
\]
For brevity, we introduce some \emph{derived} gates which are
shorthand for certain Clifford circuits.
\[
\begin{array}{ll}
\begin{qcircuit}[scale=0.3]
    \blackgrid{0}{2.8}{0}
    \XGATE{1.4}{0}
\end{qcircuit}\ \
\raisebox{0.05em}{$=$}\
\raisebox{-0.25em}{
\begin{qcircuit}[scale=0.3]
    \blackgrid{0}{7.2}{0}
    \HGATE{1.4}{0}
    \ZGATE{3.6}{0}
    \HGATE{5.8}{0}
\end{qcircuit}}
& \ \ \ \ \ \ \ \ 
\raisebox{-0.6em}{
\begin{qcircuit}[scale=0.3]
    \blackgrid{0}{2.8}{0,1.6}
    \XZCGATE{1.4}{1.6}{0}
\end{qcircuit}
\raisebox{0.75em}{$=$}
\begin{qcircuit}[scale=0.3]
    \blackgrid{0}{9.4}{0,1.6}
    \HGATE{1.4}{1.6}
    \CZGATE{3.6}{1.6}{0}
    \HGATE{5.8}{1.6}
    \CZGATE{8}{1.6}{0}
\end{qcircuit}}
\end{array}
\]
The derived gate on the left is the Pauli $X$ gate. We call the
derived gate on the right the $CXZ $ gate, an abbreviation for
\emph{controlled-$XZ$}.

We introduce \emph{annotations} on circuits to concisely indicate the
action of a Clifford operator on a Pauli operator under
conjugation. When $C \in \clifford{n}$, and $P = P_{1} \otimes \cdots
\otimes P_{n}$, $Q = Q_{1} \otimes \cdots \otimes Q_{n} \in
\pauli{n}$, we write
\[
\begin{qcircuit}[scale=0.3]
  \LABEL{0.2}{1.8}{$\vdots$}
  \LABEL{2.6}{1.8}{$\vdots$}
  \LABEL{3.8}{0}{$Q_{n}$}
  \LABEL{3.75}{3.2}{$Q_{1}$}
  \LABEL{-0.8}{0}{$P_{n}$}
  \LABEL{-0.8}{3.2}{$P_{1}$}    
  \blackgrid{0}{2.8}{0,3.2}
  \GENERIC{1.4}{3.2}{0}{$C$}
\end{qcircuit}
\]
to indicate that $C \bullet P = Q$.

It will be useful for our purposes to \emph{type} circuits. The notion
of a \emph{typed gate} coincides with the usual notion of gate, with
the difference that the wires of the gate can be of \emph{simple type}
or of \emph{double type}, as shown in the two examples below.
\[
\begin{array}{lll}
\begin{qcircuit}[scale=0.3]
    \redgrid{0}{1.4}{0}
    \blackgrid{1.4}{2.8}{0}
    \blackgrid{0}{1.4}{1.6}
    \greengrid{1.4}{2.8}{1.6}
    \GENERIC{1.4}{1.6}{0}{$G_{1}$}
\end{qcircuit}
& ~ &
\begin{qcircuit}[scale=0.3]
    \blackgrid{0}{1.4}{0}
    \blackgrid{1.4}{2.8}{0}
    \greengrid{0}{1.4}{1.6}
    \redgrid{1.4}{2.8}{1.6}
    \GENERIC{1.4}{1.6}{0}{$G_{2}$}
\end{qcircuit}
\end{array}
\]
The type of wires does not affect the vertical composition of gates,
but two gates can only be composed horizontally if the types of the
corresponding wires are the same. For example, below, the composition
on the left is well-defined but the composition on the right is not.
\[
\begin{qcircuit}[scale=0.3]
    \redgrid{0}{1.4}{0}
    \blackgrid{0}{1.4}{1.6}
    \greengrid{1.4}{3.6}{1.6}
    \blackgrid{1.4}{3.6}{0}
    \redgrid{3.6}{5}{1.6}
    \blackgrid{3.6}{5}{0}
    \GENERIC{1.4}{1.6}{0}{$G_{1}$}
    \GENERIC{3.6}{1.6}{0}{$G_{2}$}
\end{qcircuit}
\qquad
\begin{qcircuit}[scale=0.3]
    \blackgrid{0}{1.4}{0}
    \greengrid{0}{1.4}{1.6}
    \redgrid{1.4}{2.5}{1.6}
    \blackgrid{2.5}{3.6}{1.6}    
    \blackgrid{1.4}{2.5}{0}
    \redgrid{2.5}{3.6}{0}    
    \greengrid{3.6}{5}{1.6}
    \blackgrid{3.6}{5}{0}
    \GENERIC{1.4}{1.6}{0}{$G_{2}$}
    \GENERIC{3.6}{1.6}{0}{$G_{1}$}
\end{qcircuit}
\]
\emph{Typed circuits} are constructed from typed gates with this
restriction. The typing of gates and circuits is meant to constrain
the construction of circuits.

A typed gate is defined in two stages. In the first stage, a standard
gate is specified, for example by associating a diagram to a matrix or
to a circuit made from preexisting gates. In the second stage, types
are associated to the input and output wires of the gate. Note that
any typed circuit can still be viewed as an un-typed circuit by simply
\emph{forgetting} about the types of the wires.

By abuse of notation, we will sometimes use a single circuit to
concisely specify a family of (typed) circuits. As an illustration,
consider the typed gates below.
\[
\begin{array}{llll}
\begin{qcircuit}[scale=0.3]
    \blackgrid{0}{1.4}{0}
    \greengrid{1.4}{2.8}{0}
    \GENERIC{1.4}{0}{0}{$G_{1}$}
\end{qcircuit}
&
\begin{qcircuit}[scale=0.3]
    \blackgrid{0}{1.4}{0}
    \redgrid{1.4}{2.8}{0}
    \GENERIC{1.4}{0}{0}{$G_{2}$}
\end{qcircuit}
&
\begin{qcircuit}[scale=0.3]
    \redgrid{0}{1.4}{0}
    \blackgrid{1.4}{2.8}{0}
    \GENERIC{1.4}{0}{0}{$H_{1}$}
\end{qcircuit}
&
\begin{qcircuit}[scale=0.3]
    \greengrid{0}{1.4}{0}
    \blackgrid{1.4}{2.8}{0}
    \GENERIC{1.4}{0}{0}{$H_{2}$}
\end{qcircuit}
\end{array}
\]
Then the diagram
\[
\begin{qcircuit}[scale=0.3]
    \blackgrid{0}{1.4}{0}
    \blackgrid{1.4}{3.6}{0}
    \blackgrid{3.6}{5}{0}
    \GENERIC{1.4}{0}{0}{$G$}
    \GENERIC{3.6}{0}{0}{$H$}
\end{qcircuit}
\]
represents the family of circuits in which the gate on the left-hand
side is one of $G_1$ or $G_2$ and the gate on the right-hand side is
one of $H_1$ or $H_2$ subject to the condition that the circuit is a
well-formed typed circuit. There are two circuits in this specific
family, which are represented below.
\[
\begin{array}{lll}
\begin{qcircuit}[scale=0.3]
    \blackgrid{0}{1.4}{0}
    \greengrid{1.4}{3.6}{0}
    \blackgrid{3.6}{5}{0}
    \GENERIC{1.4}{0}{0}{$G_1$}
    \GENERIC{3.6}{0}{0}{$H_2$}
\end{qcircuit}
& ~ &
\begin{qcircuit}[scale=0.3]
    \blackgrid{0}{1.4}{0}
    \redgrid{1.4}{3.6}{0}
    \blackgrid{3.6}{5}{0}
    \GENERIC{1.4}{0}{0}{$G_2$}
    \GENERIC{3.6}{0}{0}{$H_1$}
\end{qcircuit}
\end{array}
\]

\section{Normal Forms for Real Stabilizer Circuits}
\label{sec:nf}

We now introduce \emph{normal forms} for stabilizer operators. That
is, we specify a family of circuits and show that every stabilizer
operator is represented by a unique element of this family.

\subsection{Derived Generators}

We start by introducing \emph{derived generators}, which will serve as
the basic building blocks for our normal forms. As discussed in
\cref{sec:circuits}, we introduce these derived generators in two
stages: first we define the gates as (un-typed) circuits and then we
specify the types of their wires. There are five kinds of derived
generators: $A$, $B$, $C$, $D$, and $E$.

\begin{definition}
\label{def:agates}
The \emph{derived generators of kind $A$} are defined below.
\[
\begin{array}{lll}
\begin{qcircuit}[scale=0.3]
    \blackgrid{0}{2.8}{0}
    \AONE{1.4}{0}
\end{qcircuit} \ \
\raisebox{0.5em}{$=$} \
\raisebox{0.64em}{
\begin{qcircuit}[scale=0.3]
    \blackgrid{0}{2.8}{0}
\end{qcircuit}}
& \ \ \ \ \ 
\begin{qcircuit}[scale=0.3]
    \blackgrid{0}{2.8}{0}
    \ATWO{1.4}{0}
\end{qcircuit}\ \
\raisebox{0.3em}{$=$}
\raisebox{0.02em}{
\begin{qcircuit}[scale=0.3]
    \blackgrid{0}{2.8}{0}
    \HGATE{1.4}{0}
\end{qcircuit}}
& \ \ \ \ \ 
\begin{qcircuit}[scale=0.3]
    \blackgrid{0}{2.8}{0}
    \ATHREE{1.4}{0}
\end{qcircuit}\ \
\raisebox{0.3em}{$=$}
\raisebox{0.64em}{
\begin{qcircuit}[scale=0.3]
    \blackgrid{0}{2.8}{0}
\end{qcircuit}}
\end{array}
\]
\end{definition}

\begin{definition}
\label{def:bgates}
The \emph{derived generators kind $B$} are defined below.
\[
\begin{array}{l}
\begin{qcircuit}[scale=0.3]
    \blackgrid{0}{2.8}{0}
    \blackgrid{0}{2.8}{1.6}
    \BONE{1.4}{1.6}{0}
\end{qcircuit}
\raisebox{1em}{$=$}
\begin{qcircuit}[scale=0.3]
    \blackgrid{0}{2.8}{0}
    \blackgrid{0}{2.8}{1.6}
    \BFIVE{1.4}{1.6}{0}
\end{qcircuit}
\raisebox{1em}{$=$}
\begin{qcircuit}[scale=0.3]
    \blackgrid{0}{13.8}{0,1.6}
    \HGATE{1.4}{0}
    \CZGATE{3.6}{1.6}{0}
    \HGATE{5.8}{0}
    \HGATE{5.8}{1.6}
    \CZGATE{8}{1.6}{0}
    \HGATE{10.2}{0}
    \HGATE{10.2}{1.6}
    \CZGATE{12.4}{1.6}{0}
\end{qcircuit}
\\
\begin{qcircuit}[scale=0.3]
    \blackgrid{0}{2.8}{0}
    \blackgrid{0}{2.8}{1.6}
    \BTWO{1.4}{1.6}{0}
\end{qcircuit}
\raisebox{1em}{$=$}
\begin{qcircuit}[scale=0.3]
    \blackgrid{0}{2.8}{0}
    \blackgrid{0}{2.8}{1.6}
    \BSIX{1.4}{1.6}{0}
\end{qcircuit}
\raisebox{1em}{$=$}
\begin{qcircuit}[scale=0.3]
    \blackgrid{0}{7.2}{0,1.6}
    \CZGATE{1.4}{1.6}{0}
    \HGATE{3.6}{0}
    \HGATE{3.6}{1.6}
    \CZGATE{5.8}{1.6}{0}
\end{qcircuit}
\\
\begin{qcircuit}[scale=0.3]
    \blackgrid{0}{2.8}{0}
    \blackgrid{0}{2.8}{1.6}
    \BTHREE{1.4}{1.6}{0}
\end{qcircuit}
\raisebox{1em}{$=$}
\begin{qcircuit}[scale=0.3]
    \blackgrid{0}{2.8}{0}
    \blackgrid{0}{2.8}{1.6}
    \BSEV{1.4}{1.6}{0}
\end{qcircuit}
\raisebox{1em}{$=$}
\begin{qcircuit}[scale=0.3]
    \blackgrid{0}{9.4}{0,1.6}
    \HGATE{1.4}{1.6}
    \CZGATE{3.6}{1.6}{0}
    \HGATE{5.8}{0}
    \HGATE{5.8}{1.6}
    \CZGATE{8}{1.6}{0}
\end{qcircuit}
\\
\begin{qcircuit}[scale=0.3]
    \blackgrid{0}{2.8}{0}
    \blackgrid{0}{2.8}{1.6}
    \BFOUR{1.4}{1.6}{0}
\end{qcircuit}
\raisebox{1em}{$=$}
\begin{qcircuit}[scale=0.3]
    \blackgrid{0}{2.8}{0}
    \blackgrid{0}{2.8}{1.6}
    \BEIG{1.4}{1.6}{0}
\end{qcircuit}
\raisebox{1em}{$=$}
\begin{qcircuit}[scale=0.3]
    \blackgrid{0}{13.8}{0,1.6}
    \HGATE{1.4}{0}
    \CZGATE{3.6}{1.6}{0}
    \HGATE{5.8}{0}
    \CZGATE{8}{1.6}{0}
    \HGATE{10.2}{0}
    \HGATE{10.2}{1.6}
    \CZGATE{12.4}{1.6}{0}
\end{qcircuit}
\\
\end{array}
\]
\end{definition}

\begin{definition}
\label{def:cgates}
The \emph{derived generators of kind $C$} are defined below.
\[
\begin{array}{ll}
\begin{qcircuit}[scale=0.3]
    \blackgrid{0}{2.8}{0}
    \CONE{1.4}{0}
\end{qcircuit} \ \ 
\raisebox{0.3em}{$=$}
\raisebox{0.65em}{
\begin{qcircuit}[scale=0.3]
    \blackgrid{0}{2.8}{0}
\end{qcircuit}}
& \ \ \ \ \ \
\begin{qcircuit}[scale=0.3]
    \blackgrid{0}{2.8}{0}
    \CTWO{1.4}{0}
\end{qcircuit} \ \
\raisebox{0.35em}{$=$}
\raisebox{0.25em}{
\begin{qcircuit}[scale=0.3]
    \blackgrid{0}{2.8}{0}
    \XGATE{1.4}{0}
\end{qcircuit}}
\end{array}
\]
\end{definition}

\begin{definition}
\label{def:dgates}
The \emph{derived generators of kind $D$} are defined below.
\[
\begin{array}{ll}
\begin{qcircuit}[scale=0.3]
    \blackgrid{0}{2.8}{0,1.6}
    \DONE{1.4}{1.6}{0}
\end{qcircuit}
\raisebox{1em}{$=$}
\begin{qcircuit}[scale=0.3]
    \blackgrid{0}{13.8}{0,1.6}
    \CZGATE{1.4}{1.6}{0}
    \HGATE{3.6}{0}
    \HGATE{3.6}{1.6}
    \CZGATE{5.8}{1.6}{0}
    \HGATE{8}{0}
    \HGATE{8}{1.6}
    \CZGATE{10.2}{1.6}{0}
    \HGATE{12.4}{0}
\end{qcircuit}
\\
\begin{qcircuit}[scale=0.3]
    \blackgrid{0}{2.8}{0,1.6}
    \DTWO{1.4}{1.6}{0}
\end{qcircuit}
\raisebox{1em}{$=$}
\begin{qcircuit}[scale=0.3]
    \blackgrid{0}{11.6}{0,1.6}
    \HGATE{1.4}{1.6}
    \CZGATE{3.6}{1.6}{0}
    \HGATE{5.8}{0}
    \HGATE{5.8}{1.6}
    \CZGATE{8}{1.6}{0}
    \HGATE{10.2}{0}
\end{qcircuit}
\\
\begin{qcircuit}[scale=0.3]
    \blackgrid{0}{2.8}{0,1.6}
    \DTHREE{1.4}{1.6}{0}
\end{qcircuit}
\raisebox{1em}{$=$}
\begin{qcircuit}[scale=0.3]
    \blackgrid{0}{11.6}{0,1.6}
    \HGATE{1.4}{0}
    \HGATE{1.4}{1.6}
    \CZGATE{3.6}{1.6}{0}
    \HGATE{5.8}{0}
    \HGATE{5.8}{1.6}
    \CZGATE{8}{1.6}{0}
    \HGATE{10.2}{0}
\end{qcircuit}
\\
\begin{qcircuit}[scale=0.3]
    \blackgrid{0}{2.8}{0,1.6}
    \DFOUR{1.4}{1.6}{0}
\end{qcircuit}
\raisebox{1em}{$=$}
\begin{qcircuit}[scale=0.3]
    \blackgrid{0}{13.8}{0,1.6}
    \HGATE{1.4}{1.6}
    \CZGATE{3.6}{1.6}{0}
    \HGATE{5.8}{0}
    \HGATE{5.8}{1.6}
    \CZGATE{8}{1.6}{0}
    \HGATE{10.2}{0}
    \CZGATE{12.4}{1.6}{0}
\end{qcircuit}
\end{array}
\]
\end{definition}

\begin{definition}
\label{def:egates}
The \emph{derived generators of kind $E$} are defined below.
\[
\begin{array}{ll}
\begin{qcircuit}[scale=0.3]
    \blackgrid{0}{2.8}{0}
    \EONE{1.4}{0}
\end{qcircuit}\ \ 
\raisebox{0.3em}{$=$}
\raisebox{0.64em}{
\begin{qcircuit}[scale=0.3]
    \blackgrid{0}{2.8}{0}
\end{qcircuit}}
& \ \ \ \ \ \
\begin{qcircuit}[scale=0.3]
    \blackgrid{0}{2.8}{0}
    \ETWO{1.4}{0}
\end{qcircuit}\ \ 
\raisebox{0.3em}{$=$}
\raisebox{0.44em}{
\begin{qcircuit}[scale=0.3]
    \blackgrid{0}{2.8}{0}
    \ZGATE{1.4}{0}
\end{qcircuit}}
\end{array}
\]
\end{definition}

\begin{definition}
The \emph{typed derived generators of kind $A$, $B$, $C$, $D$, and
$E$} are defined below.
\[
\begin{array}{lll}
\begin{qcircuit}[scale=0.3]
    \blackgrid{0}{1.4}{0}
    \greengrid{1.4}{2.8}{0}
    \AONE{1.4}{0}
\end{qcircuit} 
& 
\begin{qcircuit}[scale=0.3]
    \blackgrid{0}{1.4}{0}
    \greengrid{1.4}{2.8}{0}
    \ATWO{1.4}{0}
\end{qcircuit}
&
\begin{qcircuit}[scale=0.3]
    \blackgrid{0}{1.4}{0}
    \redgrid{1.4}{2.8}{0}
    \ATHREE{1.4}{0}
\end{qcircuit}
\end{array}
\]
\[
\begin{array}{l}
\begin{array}{llll}
\begin{qcircuit}[scale=0.3]
    \greengrid{0}{1.4}{0}
    \blackgrid{1.4}{2.8}{0}
    \blackgrid{0}{1.4}{1.6}
    \greengrid{1.4}{2.8}{1.6}
    \BONE{1.4}{1.6}{0}
\end{qcircuit}
&
\begin{qcircuit}[scale=0.3]
    \greengrid{0}{1.4}{0}
    \blackgrid{1.4}{2.8}{0}
    \blackgrid{0}{1.4}{1.6}
    \greengrid{1.4}{2.8}{1.6}
    \BTWO{1.4}{1.6}{0}
\end{qcircuit}
&
\begin{qcircuit}[scale=0.3]
    \greengrid{0}{1.4}{0}
    \blackgrid{1.4}{2.8}{0}
    \blackgrid{0}{1.4}{1.6}
    \greengrid{1.4}{2.8}{1.6}
    \BTHREE{1.4}{1.6}{0}
\end{qcircuit}
&
\begin{qcircuit}[scale=0.3]
    \greengrid{0}{1.4}{0}
    \blackgrid{1.4}{2.8}{0}
    \blackgrid{0}{1.4}{1.6}
    \redgrid{1.4}{2.8}{1.6}
    \BFOUR{1.4}{1.6}{0}
\end{qcircuit} \\
\begin{qcircuit}[scale=0.3]
    \redgrid{0}{1.4}{0}
    \blackgrid{1.4}{2.8}{0}
    \blackgrid{0}{1.4}{1.6}
    \redgrid{1.4}{2.8}{1.6}
    \BFIVE{1.4}{1.6}{0}
\end{qcircuit}
&
\begin{qcircuit}[scale=0.3]
    \redgrid{0}{1.4}{0}
    \blackgrid{1.4}{2.8}{0}
    \blackgrid{0}{1.4}{1.6}
    \redgrid{1.4}{2.8}{1.6}
    \BSIX{1.4}{1.6}{0}
\end{qcircuit}
&
\begin{qcircuit}[scale=0.3]
    \redgrid{0}{1.4}{0}
    \blackgrid{1.4}{2.8}{0}
    \blackgrid{0}{1.4}{1.6}
    \redgrid{1.4}{2.8}{1.6}
    \BSEV{1.4}{1.6}{0}
\end{qcircuit}
&
\begin{qcircuit}[scale=0.3]
    \redgrid{0}{1.4}{0}
    \blackgrid{1.4}{2.8}{0}
    \blackgrid{0}{1.4}{1.6}
    \greengrid{1.4}{2.8}{1.6}
    \BEIG{1.4}{1.6}{0}
\end{qcircuit}
\end{array}
\end{array}
\]
\[
\begin{array}{ll}
\begin{qcircuit}[scale=0.3]
    \greengrid{0}{1.4}{0}
    \blackgrid{1.4}{2.8}{0}
    \CONE{1.4}{0}
\end{qcircuit}
&
\begin{qcircuit}[scale=0.3]
    \greengrid{0}{1.4}{0}
    \blackgrid{1.4}{2.8}{0}
    \CTWO{1.4}{0}
\end{qcircuit}
\end{array}
\]
\[
\begin{array}{llll}
\begin{qcircuit}[scale=0.3]
    \blackgrid{0}{2.8}{0,1.6}
    \DONE{1.4}{1.6}{0}
\end{qcircuit}
&
\begin{qcircuit}[scale=0.3]
    \blackgrid{0}{2.8}{0,1.6}
    \DTWO{1.4}{1.6}{0}
\end{qcircuit}
&
\begin{qcircuit}[scale=0.3]
    \blackgrid{0}{2.8}{0,1.6}
    \DTHREE{1.4}{1.6}{0}
\end{qcircuit}
&
\begin{qcircuit}[scale=0.3]
    \blackgrid{0}{2.8}{0,1.6}
    \DFOUR{1.4}{1.6}{0}
\end{qcircuit}
\end{array}
\]

\[
\begin{array}{ll}
\begin{qcircuit}[scale=0.3]
    \blackgrid{0}{2.8}{0}
    \EONE{1.4}{0}
\end{qcircuit}
&
\begin{qcircuit}[scale=0.3]
    \blackgrid{0}{2.8}{0}
    \ETWO{1.4}{0}
\end{qcircuit}
\end{array}
\]
\end{definition}

\begin{proposition}
\label{genActions}
The following annotated circuits record the action of the derived
generators of kind $A$, $B$, $C$, $D$, and $E$ on certain Pauli
operators.
\[
\begin{array}{lll}
\begin{qcircuit}[scale=0.37]
    \LABEL{-0.6}{0}{$Z$}
    \LABEL{3.4}{0}{$Z$}
    \blackgrid{0}{1.4}{0}
    \greengrid{1.4}{2.8}{0}
    \AONE{1.4}{0}
\end{qcircuit}
&
\begin{qcircuit}[scale=0.37]
    \LABEL{-0.6}{0}{$X$}
    \LABEL{3.4}{0}{$Z$}
    \blackgrid{0}{1.4}{0}
    \greengrid{1.4}{2.8}{0}
    \ATWO{1.4}{0}
\end{qcircuit}
&
\begin{qcircuit}[scale=0.37]
    \LABEL{-0.8}{0}{$XZ$}
    \LABEL{3.6}{0}{$XZ$}
    \blackgrid{0}{1.4}{0}
    \redgrid{1.4}{2.8}{0}
    \ATHREE{1.4}{0}
\end{qcircuit}
\end{array}
\]

\[
\begin{array}{ll}
\begin{qcircuit}[scale=0.37]
    \LABEL{-0.6}{0}{$Z$}
    \LABEL{-0.8}{1.6}{$XZ$}
    \LABEL{3.4}{0}{$I$}
    \LABEL{3.6}{1.6}{$XZ$}
    \greengrid{0}{1.4}{0}
    \blackgrid{1.4}{2.8}{0}
    \blackgrid{0}{1.4}{1.6}
    \redgrid{1.4}{2.8}{1.6}
    \BFOUR{1.4}{1.6}{0}

    \LABEL{-0.6}{3.2}{$Z$}
    \LABEL{-0.6}{4.8}{$Z$}
    \LABEL{3.4}{3.2}{$I$}
    \LABEL{3.4}{4.8}{$Z$}
    \greengrid{0}{1.4}{3.2}
    \blackgrid{1.4}{2.8}{3.2}
    \blackgrid{0}{1.4}{4.8}
    \greengrid{1.4}{2.8}{4.8}
    \BTHREE{1.4}{4.8}{3.2}

    \LABEL{-0.6}{6.4}{$Z$}
    \LABEL{-0.6}{8}{$X$}
    \LABEL{3.4}{6.4}{$I$}
    \LABEL{3.4}{8}{$Z$}
    \greengrid{0}{1.4}{6.4}
    \blackgrid{1.4}{2.8}{6.4}
    \blackgrid{0}{1.4}{8}
    \greengrid{1.4}{2.8}{8}
    \BTWO{1.4}{8}{6.4}

    \LABEL{-0.6}{9.6}{$Z$}
    \LABEL{-0.6}{11.2}{$I$}
    \LABEL{3.4}{9.6}{$I$}
    \LABEL{3.4}{11.2}{$Z$}
    \greengrid{0}{1.4}{9.6}
    \blackgrid{1.4}{2.8}{9.6}
    \blackgrid{0}{1.4}{11.2}
    \greengrid{1.4}{2.8}{11.2}
    \BONE{1.4}{11.2}{9.6}
\end{qcircuit}
&
\begin{qcircuit}[scale=0.37]
    \LABEL{-0.8}{0}{$XZ$}
    \LABEL{-0.8}{1.6}{$XZ$}
    \LABEL{3.6}{0}{$I$}
    \LABEL{3.6}{1.6}{$Z$}
    \redgrid{0}{1.4}{0}
    \blackgrid{1.4}{2.8}{0}
    \blackgrid{0}{1.4}{1.6}
    \greengrid{1.4}{2.8}{1.6}
    \BEIG{1.4}{1.6}{0}

    \LABEL{-0.8}{3.2}{$XZ$}
    \LABEL{-0.6}{4.8}{$Z$}
    \LABEL{3.6}{3.2}{$I$}
    \LABEL{3.6}{4.8}{$XZ$}
    \redgrid{0}{1.4}{3.2}
    \blackgrid{1.4}{2.8}{3.2}
    \blackgrid{0}{1.4}{4.8}
    \redgrid{1.4}{2.8}{4.8}
    \BSEV{1.4}{4.8}{3.2}

    \LABEL{-0.8}{6.4}{$XZ$}
    \LABEL{-0.6}{8}{$X$}
    \LABEL{3.6}{6.4}{$I$}
    \LABEL{3.6}{8}{$XZ$}
    \redgrid{0}{1.4}{6.4}
    \blackgrid{1.4}{2.8}{6.4}
    \blackgrid{0}{1.4}{8}
    \redgrid{1.4}{2.8}{8}
    \BSIX{1.4}{8}{6.4}

    \LABEL{-0.8}{9.6}{$XZ$}
    \LABEL{-0.6}{11.2}{$I$}
    \LABEL{3.6}{9.6}{$I$}
    \LABEL{3.6}{11.2}{$XZ$}
    \redgrid{0}{1.4}{9.6}
    \blackgrid{1.4}{2.8}{9.6}
    \blackgrid{0}{1.4}{11.2}
    \redgrid{1.4}{2.8}{11.2}
    \BFIVE{1.4}{11.2}{9.6}
\end{qcircuit}
\end{array}
\]

\[
\begin{array}{ll}
\begin{qcircuit}[scale=0.37]
    \LABEL{-0.6}{0}{$Z$}
    \LABEL{3.4}{0}{$Z$}
    \greengrid{0}{1.4}{0}
    \blackgrid{1.4}{2.8}{0}
    \CONE{1.4}{0}
\end{qcircuit}
&
\begin{qcircuit}[scale=0.37]
    \LABEL{-0.8}{0}{$-Z$}
    \LABEL{3.4}{0}{$Z$}
    \greengrid{0}{1.4}{0}
    \blackgrid{1.4}{2.8}{0}
    \CTWO{1.4}{0}
\end{qcircuit}
\end{array}
\]

\[
\begin{array}{llll}
\begin{qcircuit}[scale=0.37]
    \LABEL{-1}{0}{$XZ$}
    \LABEL{-0.6}{1.6}{$X$}
    \LABEL{3.8}{0}{$XZ$}
    \LABEL{3.4}{1.6}{$I$}
    \blackgrid{0}{2.8}{0,1.6}
    \DFOUR{1.4}{1.6}{0}

    \LABEL{-0.6}{3.2}{$Z$}
    \LABEL{-0.6}{4.8}{$X$}
    \LABEL{3.4}{3.2}{$X$}
    \LABEL{3.4}{4.8}{$I$}
    \blackgrid{0}{2.8}{3.2,4.8}
    \DTHREE{1.4}{4.8}{3.2}

    \LABEL{-0.6}{6.4}{$X$}
    \LABEL{-0.6}{8}{$X$}
    \LABEL{3.4}{6.4}{$X$}
    \LABEL{3.4}{8}{$I$}
    \blackgrid{0}{2.8}{6.4,8}
    \DTWO{1.4}{8}{6.4}

    \LABEL{-0.6}{9.6}{$I$}
    \LABEL{-0.6}{11.2}{$X$}
    \LABEL{3.4}{9.6}{$X$}
    \LABEL{3.4}{11.2}{$I$}
    \blackgrid{0}{2.8}{9.6,11.2}
    \DONE{1.4}{11.2}{9.6}
\end{qcircuit}
&
\begin{qcircuit}[scale=0.37]
    \LABEL{-1}{0}{$XZ$}
    \LABEL{-1}{1.6}{$XZ$}
    \LABEL{3.4}{0}{$X$}
    \LABEL{3.6}{1.6}{$I$}
    \blackgrid{0}{2.8}{0,1.6}
    \DFOUR{1.4}{1.6}{0}

    \LABEL{-0.6}{3.2}{$Z$}
    \LABEL{-1}{4.8}{$XZ$}
    \LABEL{3.8}{3.2}{$XZ$}
    \LABEL{3.4}{4.8}{$I$}
    \blackgrid{0}{2.8}{3.2,4.8}
    \DTHREE{1.4}{4.8}{3.2}

    \LABEL{-0.6}{6.4}{$X$}
    \LABEL{-1}{8}{$XZ$}
    \LABEL{3.8}{6.4}{$XZ$}
    \LABEL{3.4}{8}{$I$}
    \blackgrid{0}{2.8}{6.4,8}
    \DTWO{1.4}{8}{6.4}

    \LABEL{-0.6}{9.6}{$I$}
    \LABEL{-1}{11.2}{$XZ$}
    \LABEL{3.8}{9.6}{$XZ$}
    \LABEL{3.4}{11.2}{$I$}
    \blackgrid{0}{2.8}{9.6,11.2}
    \DONE{1.4}{11.2}{9.6}
\end{qcircuit}
&
\begin{qcircuit}[scale=0.37]
    \LABEL{-0.6}{0}{$I$}
    \LABEL{-0.6}{1.6}{$Z$}
    \LABEL{3.4}{0}{$Z$}
    \LABEL{3.4}{1.6}{$I$}
    \blackgrid{0}{2.8}{0,1.6}
    \DFOUR{1.4}{1.6}{0}

    \LABEL{-0.6}{3.2}{$I$}
    \LABEL{-0.6}{4.8}{$Z$}
    \LABEL{3.4}{3.2}{$Z$}
    \LABEL{3.4}{4.8}{$I$}
    \blackgrid{0}{2.8}{3.2,4.8}
    \DTHREE{1.4}{4.8}{3.2}

    \LABEL{-0.6}{6.4}{$I$}
    \LABEL{-0.6}{8}{$Z$}
    \LABEL{3.4}{6.4}{$Z$}
    \LABEL{3.4}{8}{$I$}
    \blackgrid{0}{2.8}{6.4,8}
    \DTWO{1.4}{8}{6.4}

    \LABEL{-0.6}{9.6}{$I$}
    \LABEL{-0.6}{11.2}{$Z$}
    \LABEL{3.4}{9.6}{$Z$}
    \LABEL{3.4}{11.2}{$I$}
    \blackgrid{0}{2.8}{9.6,11.2}
    \DONE{1.4}{11.2}{9.6}
\end{qcircuit}
\end{array}
\]

\[
\begin{array}{llll}
\begin{qcircuit}[scale=0.37]
    \LABEL{-0.6}{0}{$X$}
    \LABEL{3.4}{0}{$X$}
    \blackgrid{0}{2.8}{0}
    \EONE{1.4}{0}
\end{qcircuit}
&
\begin{qcircuit}[scale=0.37]
    \LABEL{-0.6}{0}{$Z$}
    \LABEL{3.4}{0}{$Z$}
    \blackgrid{0}{2.8}{0}
    \EONE{1.4}{0}
\end{qcircuit}
&
\begin{qcircuit}[scale=0.37]
    \LABEL{-0.8}{0}{$-X$}
    \LABEL{3.4}{0}{$X$}
    \blackgrid{0}{2.8}{0}
    \ETWO{1.4}{0}
\end{qcircuit}
&
\begin{qcircuit}[scale=0.37]
    \LABEL{-0.8}{0}{$Z$}
    \LABEL{3.4}{0}{$Z$}
    \blackgrid{0}{2.8}{0}
    \ETWO{1.4}{0}
\end{qcircuit}
\end{array}
\]
In each case, the specified gate is the unique derived generator of
its kind and type that performs the specified action.
\end{proposition}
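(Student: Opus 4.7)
The proof splits into two tasks: verifying the listed actions (part 1), and establishing the uniqueness claim within each kind and type (part 2).

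For part 1, each derived generator is defined by an explicit circuit over $\{H, Z, CZ\}$ in Definitions \ref{def:agates}--\ref{def:egates}. I would verify each action by propagating the specified input Pauli through the circuit, gate by gate, using the standard conjugation rules
\[
H \bullet X = Z, \quad H \bullet Z = X, \quad Z \bullet X = -X, \quad Z \bullet Z = Z,
\]
together with the facts that $CZ$ fixes $Z \otimes I$ and $I \otimes Z$, while $CZ \bullet (X \otimes I) = X \otimes Z$ and $CZ \bullet (I \otimes X) = Z \otimes X$, and with the observation that conjugation by a tensor product distributes over the tensor factors. The generators of kinds $A$, $C$, and $E$ reduce to $I$, $H$, $X$, or $Z$, so their actions can be read off immediately. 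The $B$ and $D$ generators involve a few Hadamard and $CZ$ gates, so each propagation is a routine multi-step calculation; I would carry out one representative case in full (say $B_3$) and note that the remaining ones proceed identically, organising the computations in a table if needed.

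For part 2, I would enumerate the derived generators of each kind and type and check that their actions differ. For kinds $A$, $C$, and $E$, the two or three generators of the same type are obviously distinct elementary gates ($I$ versus $H$, $I$ versus $X$, $I$ versus $Z$), and one sees directly that only the specified generator maps the specified input to the specified output. For kind $B$, the type with green-black inputs and black-green outputs contains $B_1, B_2, B_3$, which act on distinct input Paulis ($Z \otimes I$, $Z \otimes X$, $Z \otimes Z$), so they are distinguished simply by the first coordinate of the annotation; the same reasoning applies to $B_5, B_6, B_7$ on the red-input row, while $B_4$ and $B_8$ are the unique generators of their respective (singleton) types. For kind $D$, all four generators share the same type, and while all send $I \otimes Z \mapsto Z \otimes I$, the displayed actions on $X \otimes I$ or $I \otimes X$ produce four distinct output Paulis, which distinguishes them.

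The main obstacle is the sheer bulk of the verification rather than any conceptual subtlety; no tool beyond the commutation rules above and the structural facts established in \cref{sec:groups} is needed. Some care is required in tracking signs --- especially in the rows for $C_2$ and $E_2$, where the minus sign on the input side is essential --- but this is bookkeeping rather than a substantive difficulty.
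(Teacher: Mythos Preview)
Your proposal is correct and matches the paper's own proof, which consists of the single line ``By computation.'' You have simply spelled out what that computation entails—propagating Paulis through the defining circuits and checking distinctness within each type class—and nothing more is needed.
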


\begin{proof}
By computation.
\end{proof}

\subsection{Normal forms}

We now describe normal forms.

\begin{definition}
\label{def:znormal}
A typed $n$-qubit circuit is a \emph{$Z$-circuit} if it is of the form
\[
\begin{qcircuit}[scale=0.3]
    \Period{16.4}{-3.2}
    \LABEL{.6}{4.3}{$\vdots$}
    \LABEL{15.4}{4.3}{$\vdots$}
    \LABEL{.6}{-2.13}{$\vdots$}
    \LABEL{15.4}{-2.13}{$\vdots$}
    \LABEL{8.1}{3.19}{$\cdots$}
    \blackgrid{0}{16}{-1.6}
    \blackgrid{0}{16}{-3.2}
    \blackgrid{0}{1.4}{0}
    \blackgrid{1.4}{3.6}{0}
    \blackgrid{3.6}{16}{0}
    
    \blackgrid{0}{3.6}{1.6}
    \blackgrid{3.6}{5.8}{1.6}
    \blackgrid{5.8}{16}{1.6}
    
    \blackgrid{0}{5.8}{3.2}
    \blackgrid{5.8}{7.2}{3.2}
    \blackgrid{8.8}{10.2}{3.2}
    \blackgrid{10.2}{16}{3.2}
    
    \blackgrid{0}{10.2}{4.8}
    \blackgrid{10.2}{12.4}{4.8}
    \blackgrid{12.4}{16}{4.8}
    
    \blackgrid{0}{12.4}{6.4}
    \greengrid{12.4}{14.6}{6.4}
    \blackgrid{14.6}{16}{6.4}
    
    \GENERICORANGE{1.4}{0}{0}{$A$}
    \GENERICGREEN{3.6}{1.6}{0}{$B$}
    \GENERICGREEN{5.8}{3.2}{1.6}{$B$}
    \GENERICGREEN{10.2}{4.8}{3.2}{$B$}
    \GENERICGREEN{12.4}{6.4}{4.8}{$B$}
    \GENERICYELLOW{14.6}{6.4}{6.4}{$C$}
\end{qcircuit}
\]
\end{definition}

In accordance with the convention introduced in \cref{sec:circuits},
the circuit in \cref{def:znormal} denotes a family of well-formed
typed circuits. The types of the derived generators of kind $B$ then
imply, for example, that if the first $B$ gate is $B_4$, then the
second $B$ gate can only be $B_5$, $B_6$, $B_7$, or $B_8$.

\begin{definition}
\label{def:xnormal}
A typed $n$-qubit circuit is an \emph{$X$-circuit} if it is of the
form
\[
\begin{qcircuit}[scale=0.3]
    \Period{14.2}{-1.6}
    \LABEL{8.1}{1.56}{$\cdots$}
    \LABEL{8.1}{-0.04}{$\cdots$}
    \LABEL{0.6}{-0.5}{$\vdots$}
    \blackgrid{0}{7.2}{0,1.6,3.2,4.8,6.4}
    \blackgrid{7.2}{13.8}{3.2,4.8,6.4}
    \blackgrid{8.8}{13.8}{1.6}
    \blackgrid{8.8}{13.8}{0}
    \blackgrid{0}{13.8}{-1.6}
    \GENERICBLUE{1.4}{6.4}{4.8}{$D$}
    \GENERICBLUE{3.6}{4.8}{3.2}{$D$}
    \GENERICBLUE{5.8}{3.2}{1.6}{$D$}
    \GENERICBLUE{10.2}{0}{-1.6}{$D$}
    \GENERICPINK{12.4}{-1.6}{-1.6}{$E$}
\end{qcircuit}
\]
\end{definition}

\begin{definition}
\label{def:normal}
A typed $n$-qubit circuit is \emph{normal} if it is of the form
\[
\begin{array}{ll}
\raisebox{0.05em}{
\begin{qcircuit}[scale=0.3]
    \LABEL{3.9}{3.1}{$=$}
    \LABEL{2.6}{4.3}{$\vdots$}
    \LABEL{0.2}{4.3}{$\vdots$}
    \blackgrid{0}{2.8}{0,1.6,3.2,4.8,6.4}
    \colbiggate{orange!20}{$N$}{1.4,7.6}{1.4,-1.2}
\end{qcircuit}}
\begin{qcircuit}[scale=0.3]
    \LABEL{13.6}{6.36}{$\cdots$}
    \LABEL{13.6}{4.8}{$\cdots$}
    \LABEL{0.2}{4.3}{$\vdots$}
    \LABEL{22.2}{4.3}{$\vdots$}
    \LABEL{25.45}{3.2}{$\cdot (\pm 1)$}
    \blackgrid{0}{12.7}{0,1.6,3.2,4.8,6.4}
    \blackgrid{12.7}{23.6}{0,1.6,3.2}
    \blackgrid{14.3}{23.6}{6.4}
    \blackgrid{14.3}{23.6}{4.8}
    \colbiggate{green!20}{$L_{n}$}{1.4,0}{1.5,6.4}
    \colbiggate{blue!20}{$M_{n}$}{3.7,0}{3.8,6.4}
    \colbiggate{green!20}{$L_{n-1}$}{6,7.6}{7.6,.4}    
    \colbiggate{blue!20}{$M_{n-1}$}{9.8,7.6}{11.4,.4}        
    \colbiggate{green!20}{$L_{2}$}{15.6,4.8}{15.6,6.4}
    \colbiggate{blue!20}{$M_{2}$}{17.8,4.8}{17.8,6.4}
    \colbiggate{green!20}{$L_{1}$}{20,6.4}{20,6.4}
    \colbiggate{blue!20}{$M_{1}$}{22.2,6.4}{22.2,6.4}
\end{qcircuit}
\end{array}
\]
where, for $1 \leq i \leq n$, $L_{i}$ is a $Z$-circuit, and $M_{i}$ is
an $X$-circuit.
\end{definition}

The propositions below establish that every automorphism of
$\pauli{n}$ is represented by a unique normal Clifford circuit. Given
an automorphism $\phi$ of $\pauli{n}$ one can construct a $Z$-circuit
$L$ and an $X$-circuit $M$ such that $(ML)^{-1}$ acts as $\phi^{-1}$
on $I \otimes \cdots \otimes I \otimes Z$ and $ I \otimes \cdots
\otimes I \otimes X$. To obtain a normal form for $\phi$ it then
suffices to first construct $L$ and $M$ and then to recursively
proceed with the automorphism $\phi'=\phi(ML)^{-1}$. The proofs can be found in \cref{app:nfproofs}.

\begin{proposition}
  \label{zexistunique}
  Let P be a $n$-qubit Pauli operator, with $P = P_{1} \otimes P_{2}
  \otimes \cdots \otimes P_{n}$, $P^2 = I$, and $P \neq \pm I$. Then
  there exists a unique $Z$-circuit $L$ such that $L \bullet P = Z
  \otimes I \otimes \cdots \otimes I$.
\end{proposition}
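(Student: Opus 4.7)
The plan is to construct $L$ by a forced bottom-up procedure in which each gate is uniquely determined by $P$ and the type constraints, so that existence and uniqueness arise from the same analysis.

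First, locate the starting wire. Let $m \in \{1, \ldots, n\}$ be the largest index with $P_m \neq I$; this exists because $P \neq \pm I$. The $A$ gate sits on the wire for qubit $m$, and the cascade acts on qubits $1, \ldots, m$, leaving the wires for qubits $m+1, \ldots, n$ untouched; since $P_{m+1} = \cdots = P_n = I$ already matches the target on those positions, this is consistent. Next, fix the gates. The value $P_m \in \{X, Z, XZ\}$ forces a unique $A_i$ via \cref{genActions} (namely $A_2$, $A_1$, $A_3$ respectively), producing $Z$ (green type) or $XZ$ (red type) on the wire. Working up through the cascade, at each $B$ gate the lower input is a $Z$ or $XZ$ whose type is given by the previous gate's output, while the upper input is the untouched $P_i$ for some $i < m$; the eight combinations (lower type, upper value) are in bijection with $B_1, \ldots, B_8$ via \cref{genActions}, so each $B$ is forced, and it clears the lower wire while advancing a $Z$ or $XZ$ upward. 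Finally, the top wire carries some $\pm Z$ or $\pm XZ$, and $C_1$ or $C_2$ corrects the sign provided the top wire ends in green type.

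The main obstacle is verifying that the cascade does end in green. Each type switch records a use of $B_4$ or $B_8$, which corresponds to an upper input of $XZ$, i.e., to $P_i = XZ$ for some $i < m$, and the initial type records whether $P_m$ itself equals $XZ$. Hence the number of type switches plus the indicator ``$P_m = XZ$'' equals the total number of $XZ$ entries among $P_1, \ldots, P_m$, which is the total number of $XZ$ entries in $P$ (since $P_{m+1} = \cdots = P_n = I$). By \cref{prop:psquare} this total is even, and a short parity calculation then forces the cascade to terminate in green; the top wire ends at $\pm Z$, and $C$ applies.

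For uniqueness, the same analysis shows every choice is forced by $P$. The integer $m$ must equal the largest non-$I$ index: the wires below $A$ are untouched and so must already equal $I$ to match the target, and placing $A$ on a lower $I$-wire prevents the cascade from reconstructing the target $Z$ on top. The remaining choices of $A$, each $B_k$, and $C$ are then forced by the matching with \cref{genActions} at each step.
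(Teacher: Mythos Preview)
Your existence argument is correct and matches the paper's approach: locate $m$, choose the $A$ gate from $P_m$, let the $B$ gates cascade upward according to \cref{genActions}, and use \cref{prop:psquare} to verify the top wire ends in green so that a $C$ gate applies.

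There is, however, a genuine gap in your uniqueness argument. You write that ``the remaining choices of $A$, each $B_k$, and $C$ are then forced by the matching with \cref{genActions},'' but this overlooks the very subtlety the paper singles out. By \cref{def:agates} and \cref{def:bgates}, the derived generators come in pairs that are \emph{identical as operators} and differ only in their typing: $A_1 = A_3 = I$, and $B_j = B_{j+4}$ for $j = 1,2,3,4$. So when $P_m = Z$, both $A_1$ (green output) and $A_3$ (red output) act identically on $P_m$; likewise, at every rung of the ladder there are two $B$ gates with the same action on the given Pauli input but opposite bottom-wire types. The statement in \cref{genActions} is uniqueness \emph{of kind and type}, not uniqueness of kind alone, so the action on $P$ does not by itself pin down the gate. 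What remains to be shown is that no two $Z$-circuits performing the required conjugation can carry different typing schemes. The paper does this explicitly: it argues that choosing $A_1$ forces an even number of type-swapping gates from $\{B_4,B_8\}$ while choosing $A_3$ forces an odd number, and since both circuits must realise the same sequence of local actions on $P$, they cannot differ in the parity of such swaps; hence the initial $A$ gate, and then every subsequent type, is forced. Your parity count establishes only that the \emph{constructed} circuit ends green; to close the gap you need the converse direction, ruling out the alternative typing at the $A$ step (and hence at every $B$ step) for an arbitrary $Z$-circuit with the prescribed action.
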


\begin{proposition}
  \label{xexistunique}
  Let $Q$ be an $n$-qubit Pauli operator with $Q= Q_1\otimes
  Q_2\otimes \cdots \otimes Q_n$, $Q^{2}=I$, $Q \neq \pm I$, and $Q$
  anticommutes with $Z \otimes I \otimes \cdots \otimes I$. Then there
  exists a unique $X$-circuit $M$ such that $M \bullet Q = I \otimes
  \cdots \otimes I \otimes X$.
\end{proposition}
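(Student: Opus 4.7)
The plan is to proceed by induction on $n$. For the base case $n=1$, Proposition~\ref{prop:psquare} combined with $Q^{2}=I$ forces $Q$ to have no $XZ$ component, so $Q \in \{\pm I, \pm X, \pm Z\}$; together with $Q \neq \pm I$ and the anticommutation with $Z$, this leaves $Q = \pm X$. A 1-qubit $X$-circuit is just an $E$-gate, and Proposition~\ref{genActions} gives $E_{1} \bullet X = X$ while $E_{2} \bullet X = -X$, so $M = E_{1}$ or $E_{2}$ is uniquely determined by the sign of $Q$.

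For the inductive step, assume the claim for $n-1$ qubits and write $Q = \pm Q_{1} \otimes \cdots \otimes Q_{n}$. The anticommutation of $Q$ with $Z \otimes I \otimes \cdots \otimes I$ forces $Q_{1} \in \{X, XZ\}$. The first gate of the $X$-circuit is a $D$-gate on wires $1$ and $2$, and by inspection of the first two columns of the $D$-gate action table in Proposition~\ref{genActions}, for each value of $Q_{2} \in \{I, X, Z, XZ\}$ there is exactly one $D_{j}$ (namely $D_{1}, D_{2}, D_{3}, D_{4}$ respectively) whose action $D_{j} \bullet (Q_{1} \otimes Q_{2})$ has the form $I \otimes Q_{2}'$; one reads off that this $Q_{2}'$ always lies in $\{X, XZ\}$.

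Applying this $D_{j}$ on wires $1$ and $2$ yields $D_{j} \bullet Q = I \otimes R$ for an $(n-1)$-qubit signed Pauli $R$ whose first component is $Q_{2}'$. I check that $R$ satisfies the hypotheses of the proposition on $n-1$ qubits: $R^{2} = I$ since $D_{j}$ is a Clifford; $R \neq \pm I$ since otherwise $Q = \pm I$; and $R$ anticommutes with $Z \otimes I \otimes \cdots \otimes I$ on $n-1$ qubits, because Clifford conjugation preserves anticommutation and the third column of the $D$-gate table gives $D_{j} \bullet (Z \otimes I) = I \otimes Z$. The induction hypothesis then yields a unique $(n-1)$-qubit $X$-circuit $M'$ with $M' \bullet R = I \otimes \cdots \otimes I \otimes X$, and the composition $M := (I \otimes M') \circ D_{j}$ is the desired $X$-circuit.

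The main obstacle is confirming uniqueness of the first $D$-gate, i.e., that no $D_{j'} \neq D_{j}$ can also send $Q_{1} \otimes Q_{2}$ to a tensor of the form $I \otimes \cdot$. I would address this by direct case analysis over the eight inputs $(Q_{1}, Q_{2})$ with $Q_{1} \in \{X, XZ\}$, extending the actions recorded in Proposition~\ref{genActions} by routine Clifford computations to determine each $D_{j'} \bullet (Q_{1} \otimes Q_{2})$ and verifying that in every case exactly one of $D_{1}, D_{2}, D_{3}, D_{4}$ clears the top wire. With uniqueness of the first gate in hand, uniqueness of $M$ follows immediately from the inductive hypothesis, completing the proof.
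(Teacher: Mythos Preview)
Your argument is correct and is essentially the paper's proof recast as an induction: the paper also determines $Q_{1}\in\{X,XZ\}$ from the anticommutation hypothesis, then selects the unique $D$-gate (indexed by $Q_{2}$) that pushes the $X$ or $XZ$ down one wire, iterates to the bottom, and uses the parity argument from Proposition~\ref{prop:psquare} to guarantee $\pm X$ on the last wire before the $E$-gate fixes the sign. Your inductive hypothesis simply packages the tail of this iteration, and your verification that $R$ again satisfies the hypotheses (via $D_{j}\bullet(Z\otimes I)=I\otimes Z$) is exactly the mechanism the paper uses implicitly. The only remark is that what you call the ``main obstacle''---uniqueness of the $D$-gate clearing the top wire---is not really an obstacle: Proposition~\ref{genActions} already asserts that each listed $D_{j}$ is the \emph{unique} gate of its kind performing the listed action, and since the four listed inputs in each column exhaust $\{I,X,Z,XZ\}$ for $Q_{2}$, the paper (like you) treats the residual eight-case check as routine computation rather than a genuine gap.
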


\begin{proposition}
\label{xonz}
Every $X$-circuit $M$ satisfies $M \bullet (Z \otimes I \otimes \cdots
\otimes I) = I \otimes \cdots \otimes I \otimes Z$.
\end{proposition}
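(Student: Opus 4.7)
The plan is to trace the action of conjugation by $M$ on $Z \otimes I^{\otimes(n-1)}$ through the gates of the $X$-circuit one at a time, watching the $Z$ propagate downward along the staircase of $D$ gates and then pass unchanged through the final $E$ gate. By \cref{def:xnormal}, an $n$-qubit $X$-circuit factors, as a matrix product, as $M = E \cdot D^{(n-1)} \cdots D^{(2)} \cdot D^{(1)}$, where $D^{(k)}$ is a $D$-type generator acting on wires $k$ and $k+1$, and $E$ is an $E$-type generator on wire $n$ (the case $n=1$ collapsing to $M = E$).

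The two ingredients I need from \cref{genActions} are: (i) every $D$-type generator satisfies $D \bullet (Z \otimes I) = I \otimes Z$, uniformly across $D_1, D_2, D_3, D_4$; and (ii) both $E_1$ and $E_2$ satisfy $E \bullet Z = Z$. Both are single lines that can be read directly off the annotated-circuit table in \cref{genActions}, and the uniformity in (i) is exactly what makes it irrelevant which typed variant of $D$ appears at each step.

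With these in hand, I will induct on $k$ to show that conjugation by $D^{(k)} \cdots D^{(1)}$ carries $Z \otimes I^{\otimes(n-1)}$ to $I^{\otimes k} \otimes Z \otimes I^{\otimes(n-k-1)}$. The inductive step applies (i) locally to wires $k+1$ and $k+2$, where the current Pauli has the form $Z \otimes I$ on those wires and $I$ elsewhere; the other wires are untouched because $D^{(k+1)}$ acts as the identity outside wires $k+1$ and $k+2$. At $k = n-1$ the $Z$ sits on wire $n$, and then $E$ leaves it fixed by (ii), yielding $M \bullet (Z \otimes I^{\otimes(n-1)}) = I^{\otimes(n-1)} \otimes Z$. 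There is no real obstacle: the entire argument reduces to reading off two lines of \cref{genActions} and observing that the action on $Z \otimes I$ is uniform across all the typed variants of $D$, so the particular choice of $X$-circuit plays no role.
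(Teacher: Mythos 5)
Your proposal is correct and follows essentially the same route as the paper: the paper's proof simply exhibits the annotated staircase diagram in which the $Z$ is pushed down one wire at a time by the $D$ gates (using the uniform action $D\bullet(Z\otimes I)=I\otimes Z$ from \cref{genActions}) and then fixed by the $E$ gate. Your version just makes the implicit induction on the position of the $Z$ explicit, which is a fine formalization of the same argument.
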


\begin{proposition}
  \label{prop:existlm}
  Let $P$ and $Q$ be Pauli operators such that $P^2=Q^2=I,$ $P,Q \neq
  \pm I$, and $P$ and $Q$ anticommute. Then there exists a unique pair
  of a $Z$-circuit $L$ and a $X$-circuit $M$ such that $ML \bullet P =
  I \otimes \cdots \otimes I \otimes Z$ and $ML \bullet Q = I \otimes
  \cdots \otimes I \otimes X$.
\end{proposition}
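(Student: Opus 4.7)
The plan is to build $L$ and $M$ sequentially, invoking the three preceding propositions, and then leverage the same three propositions plus injectivity of conjugation to conclude uniqueness.

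For existence, I would first apply \cref{zexistunique} to $P$ (whose hypotheses $P^2=I$ and $P\neq\pm I$ are given) to obtain a unique $Z$-circuit $L$ with $L\bullet P = Z\otimes I\otimes\cdots\otimes I$. Next, I would set $Q' = L\bullet Q$ and verify that $Q'$ satisfies the hypotheses of \cref{xexistunique}. Since conjugation is a group automorphism of $\pauli{n}$, we have $(Q')^2 = L\bullet Q^2 = I$ and $Q'\neq\pm I$ (because $Q\neq\pm I$). Moreover, $Q'$ anticommutes with $Z\otimes I\otimes\cdots\otimes I = L\bullet P$ because conjugation preserves commutation relations. Hence \cref{xexistunique} yields a unique $X$-circuit $M$ with $M\bullet Q' = I\otimes\cdots\otimes I\otimes X$, which gives $ML\bullet Q = I\otimes\cdots\otimes I\otimes X$. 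For the action on $P$, I would then invoke \cref{xonz} to get $ML\bullet P = M\bullet(Z\otimes I\otimes\cdots\otimes I) = I\otimes\cdots\otimes I\otimes Z$.

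For uniqueness, suppose $(L',M')$ is another such pair. From $M'L'\bullet P = I\otimes\cdots\otimes I\otimes Z$ and \cref{xonz} applied to $M'$, we have $M'\bullet(Z\otimes I\otimes\cdots\otimes I) = I\otimes\cdots\otimes I\otimes Z$ as well, so $M'\bullet(L'\bullet P) = M'\bullet(Z\otimes I\otimes\cdots\otimes I)$. Since conjugation by $M'$ is a bijection on $\pauli{n}$, this forces $L'\bullet P = Z\otimes I\otimes\cdots\otimes I$, and the uniqueness clause of \cref{zexistunique} gives $L'=L$. Then $M'\bullet(L\bullet Q) = I\otimes\cdots\otimes I\otimes X$, and the uniqueness clause of \cref{xexistunique} gives $M' = M$.

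The proof is essentially a bookkeeping exercise chaining together the three propositions already stated, so I do not anticipate any real obstacle. The one subtle point worth highlighting is that \cref{xonz} must be used in both directions: once in existence to handle the action of $ML$ on $P$ for free (so we do not need to engineer $L$ to do any work on $P$ beyond what \cref{zexistunique} guarantees), and once in uniqueness to recover $L'\bullet P$ from $M'L'\bullet P$ without knowing $M'$ explicitly.
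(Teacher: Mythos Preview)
Your proposal is correct and follows essentially the same approach as the paper's proof: construct $L$ via \cref{zexistunique}, then $M$ via \cref{xexistunique} applied to $L\bullet Q$, and use \cref{xonz} both to obtain the action of $ML$ on $P$ in the existence part and to recover $L'\bullet P$ in the uniqueness part. Your explicit mention of injectivity of conjugation to pass from $M'\bullet(L'\bullet P)=M'\bullet(Z\otimes I\otimes\cdots\otimes I)$ to $L'\bullet P=Z\otimes I\otimes\cdots\otimes I$ makes the argument slightly more transparent than the paper's ``we can deduce,'' but the proofs are otherwise identical.
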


\begin{proposition}
  \label{autosimulate}
  Let $\phi : \pauli{n} \rightarrow \pauli{n}$ be an automorphism of
  the Pauli group. Then there exists a normal circuit $C$ such that
  for all $P$, $C \bullet P = \phi(P)$. Moreover, the normal form $C$
  is unique up to a scalar $\pm 1$.
\end{proposition}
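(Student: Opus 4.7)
The plan is to prove the proposition by induction on $n$. The base case $n=0$ is immediate: $\pauli{0} = \{\pm 1\}$ admits only the identity automorphism, which is represented by the degenerate normal circuit consisting of the scalar $\pm 1$ alone.

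For the inductive step, I would peel off the outermost layer $L_n M_n$ and reduce to an $(n-1)$-qubit automorphism. Given an automorphism $\phi$ of $\pauli{n}$, set $P = \phi^{-1}(I \otimes \cdots \otimes I \otimes Z)$ and $Q = \phi^{-1}(I \otimes \cdots \otimes I \otimes X)$. Since $\phi^{-1}$ is a group automorphism, $P$ and $Q$ satisfy the hypotheses of \cref{prop:existlm}: they square to $I$, differ from $\pm I$, and anticommute. That proposition then yields a unique $Z$-circuit $L_n$ and $X$-circuit $M_n$ with $M_n L_n \bullet P = I \otimes \cdots \otimes I \otimes Z$ and $M_n L_n \bullet Q = I \otimes \cdots \otimes I \otimes X$. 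Defining $\phi' = \phi \circ (M_n L_n)^{-1}$ then gives an automorphism of $\pauli{n}$ that fixes both $I \otimes \cdots \otimes I \otimes Z$ and $I \otimes \cdots \otimes I \otimes X$.

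The next step is to verify that any such $\phi'$ factors as $\psi \otimes \mathrm{id}$ for a unique automorphism $\psi$ of $\pauli{n-1}$. The centralizer in $\pauli{n}$ of $\{I^{\otimes(n-1)} \otimes Z,\ I^{\otimes(n-1)} \otimes X\}$ is exactly $\pauli{n-1} \otimes \{I\}$, since a Pauli $R_1 \otimes \cdots \otimes R_n$ commutes with $I^{\otimes(n-1)} \otimes Z$ iff $R_n \in \{I, Z\}$ and with $I^{\otimes(n-1)} \otimes X$ iff $R_n \in \{I, X\}$, forcing $R_n = I$. Because automorphisms preserve centralizers, $\phi'$ restricts to an automorphism of $\pauli{n-1} \otimes \{I\} \cong \pauli{n-1}$, yielding $\psi$. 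Since $\pauli{n}$ is generated by this subgroup together with $I^{\otimes(n-1)} \otimes Z$ and $I^{\otimes(n-1)} \otimes X$, the full action is $\phi' = \psi \otimes \mathrm{id}$. The induction hypothesis supplies a normal $(n-1)$-qubit circuit $C''$ for $\psi$, unique up to $\pm 1$; padding $C''$ with a trivial bottom wire and prepending $L_n$ and $M_n$ yields the desired normal circuit for $\phi$.

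For uniqueness, suppose $C_1$ and $C_2$ are two normal circuits representing $\phi$, written as $C_i = L_n^{(i)} M_n^{(i)} C_i''$ with $C_i''$ acting trivially on the last wire. Applying $C_i \bullet R = \phi(R)$ at $R = \phi^{-1}(I^{\otimes(n-1)} \otimes Z)$ and $R = \phi^{-1}(I^{\otimes(n-1)} \otimes X)$ and using triviality of $C_i''$ on the $n$-th tensor factor shows that both pairs $(L_n^{(i)}, M_n^{(i)})$ satisfy the hypotheses of \cref{prop:existlm}, so they coincide by that proposition's uniqueness clause; the induction hypothesis then forces $C_1'' = \pm C_2''$, with the sign absorbed into the overall $\pm 1$. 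The main obstacle I anticipate is the tensor-decomposition lemma $\phi' = \psi \otimes \mathrm{id}$: although it is a routine centralizer argument, it is the substantive bridge between \cref{prop:existlm} and the inductive hypothesis. A secondary concern, which follows from the typed definitions of the derived generators but still warrants verification, is that the wire types at the interface between $L_n M_n$ and the padded inner circuit $C'' \otimes I$ are compatible, so that the concatenation forms a well-typed normal circuit.
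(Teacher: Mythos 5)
Your proof is correct and follows essentially the same route as the paper: induction on $n$, extracting $L_n$ and $M_n$ via \cref{prop:existlm} applied to $P=\phi^{-1}(I^{\otimes(n-1)}\otimes Z)$ and $Q=\phi^{-1}(I^{\otimes(n-1)}\otimes X)$, factoring the residual automorphism as $\psi\otimes\mathrm{id}$, and deriving uniqueness from the uniqueness clause of \cref{prop:existlm} plus the induction hypothesis. Your centralizer phrasing of the factorization step is just a more explicit version of the paper's commutation argument, so no substantive difference remains.
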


By the existence part of \cref{autosimulate}, every automorphism of
the Pauli group can be represented as a circuit over $-1$, $H$, $Z$,
and $CZ $. Thus, all of these automorphisms are
stabilizers. Conversely, as remarked in \cref{sec:groups}, every
stabilizer is an automorphism of the Pauli group. Hence,
\cref{autosimulate} indeed establishes that every stabilizer admits a
unique normal form. Note that this also proves that stabilizers are
generated by $-1$, $Z$, $H$, and $CZ $.

By \cref{autosimulate}, there is a bijection between stabilizer
operators and normal forms. We can therefore count the number of
$n$-qubit normal forms to compute the cardinality of $\clifford{n}$.

\begin{corollary}
  \label{cor:numberofnfs}
  There are exactly $2 \cdot \prod_{i=1}^{n} (4^{i} + 2^{i} -2)(2
  \cdot 4^{i-1})$ real stabilizer operators on $n$ qubits.
\end{corollary}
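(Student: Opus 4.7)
The plan is to use \cref{autosimulate} to reduce the count of $\clifford{n}$ to a count of normal forms, and then to enumerate the data that determines a normal form.

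First, by \cref{autosimulate}, every stabilizer $C$ induces an automorphism of $\pauli{n}$, and each such automorphism is induced by exactly two stabilizers, namely $\pm C$, which correspond to the two choices of the scalar $\pm 1$ in \cref{def:normal}. Hence normal forms (including the scalar) are in bijection with the elements of $\clifford{n}$, so it suffices to count normal forms. By \cref{def:normal} a normal form is determined by a choice of sign (contributing a factor of $2$) together with, for each $1 \leq i \leq n$, a pair $(L_i, M_i)$ where $L_i$ is an $i$-qubit $Z$-circuit and $M_i$ is an $i$-qubit $X$-circuit. It therefore remains to count the $i$-qubit $Z$-circuits and the $i$-qubit $X$-circuits.

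For the $Z$-circuits I would invoke \cref{zexistunique}, which gives a bijection with $i$-qubit Paulis $P$ satisfying $P^2=I$ and $P \neq \pm I$. By \cref{prop:psquare}, $P^2=I$ iff the number of $XZ$-tensor factors is even; writing $P = \pm P_1 \otimes \cdots \otimes P_i$ with $P_j \in \s{I,X,Z,XZ}$, the number of tensor words with even $XZ$-count equals $\frac{1}{2}\bigl((3+1)^i + (3-1)^i\bigr) = \frac{1}{2}(4^i+2^i)$ by the usual generating-function identity. Multiplying by $2$ for the overall sign and subtracting $2$ to exclude $\pm I$ yields $4^i+2^i-2$ admissible $P$'s, and hence that many $Z$-circuits on $i$ qubits.

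For the $X$-circuits I would invoke \cref{xexistunique}, which gives a bijection with Paulis $Q$ satisfying $Q^2=I$, $Q \neq \pm I$, and $Q$ anticommuting with $Z \otimes I \otimes \cdots \otimes I$. The anticommutation condition forces $Q_1 \in \s{X, XZ}$, which automatically excludes $\pm I$. Splitting on $Q_1$: if $Q_1=X$, the tail $(Q_2,\ldots,Q_i)$ must have an even $XZ$-count, contributing $\frac{1}{2}(4^{i-1}+2^{i-1})$ choices; if $Q_1=XZ$, the tail must have an odd $XZ$-count, contributing $\frac{1}{2}(4^{i-1}-2^{i-1})$ choices. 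These sum to $4^{i-1}$, and a factor of $2$ for the sign yields $2 \cdot 4^{i-1}$ $X$-circuits on $i$ qubits. Multiplying the initial factor of $2$ by the per-$i$ contributions $(4^i+2^i-2)(2\cdot 4^{i-1})$ for $1 \leq i \leq n$ gives the stated formula. No step is genuinely difficult once the correspondences supplied by \cref{zexistunique} and \cref{xexistunique} are in hand; the only thing to be careful about is the parity bookkeeping when splitting on $Q_1$ in the $X$-circuit count, and keeping straight that the $\pm I$ exclusion is needed for $Z$-circuits but is automatic for $X$-circuits.
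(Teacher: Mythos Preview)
Your argument is correct and takes a genuinely different route from the paper. The paper proves \cref{cor:numberofnfs} by a direct combinatorial enumeration of the typed circuit structure: it counts $Z$-circuits by case-splitting on the initial $A$-gate and then summing binomial expressions over the number of ``type-swapping'' $B$-gates ($B_4$ and $B_8$), eventually simplifying to $4^i+2^i-2$; for $X$-circuits it simply reads off $2\cdot 4^{i-1}$ from \cref{def:xnormal} (there are $i-1$ $D$-gates with $4$ choices each and one $E$-gate with $2$ choices). You instead transport the count through the bijections supplied by \cref{zexistunique} and \cref{xexistunique} and do the enumeration on the Pauli side, where it reduces to a clean parity count via $\tfrac12\bigl((3+1)^k\pm(3-1)^k\bigr)$. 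Your route is shorter and avoids the somewhat delicate type-tracking in the paper's argument; the paper's route has the virtue of being self-contained at the level of the circuit definitions, without relying on the existence/uniqueness propositions.

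One small point you should make explicit: \cref{zexistunique} and \cref{xexistunique} as stated only give the map $P \mapsto L$ (resp.\ $Q \mapsto M$); to claim a bijection you also need that every $Z$-circuit (resp.\ $X$-circuit) arises from exactly one admissible $P$ (resp.\ $Q$). This is immediate---take $P = L^{-1}\bullet(Z\otimes I\otimes\cdots\otimes I)$ and note conjugation preserves the conditions---but for the $X$-circuit case you need \cref{xonz} to verify that the resulting $Q$ anticommutes with $Z\otimes I\otimes\cdots\otimes I$. Adding one sentence to this effect would close the only gap.
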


\begin{proof}
  See \cref{app:cliffsize}
\end{proof}

\section{Relations for Real Stabilizer Circuits}
\label{sec:rels}

We now introduce relations for real Clifford circuits and describe an
algorithm for converting any $n$-qubit Clifford circuit to its normal
form, using finitely many applications of the relations. To normalize
circuits, it is sufficient to have relations to
\begin{enumerate}
\item rewrite the empty circuit into the normal form for the identity
  and
\item rewrite a circuit consisting of a single gate appearing on the
  left of a normal form into a normal form.
\end{enumerate}
Indeed, one can then start with an arbitrary circuit, append the
normal form for the identity to the right of it, and iteratively merge
the gates of the initial circuit into the normal form on its right.

\begin{definition}
  \label{def:rels}
  The \emph{typed relations} for real stabilizers are given in
  \cref{app:rels}.
\end{definition}

The typed relations describe all situations in which one of $H$, $Z$,
$CZ$, $X$, or $CXZ$ appears to the left of a normal form. Because
these gates act on no more than two qubits, there are only finitely
many cases to consider. The difficulty arises because the right-hand
side of a relation may contain multiple gates. As a result, we are led
to consider cases where a circuit appears on the left-hand side of a
rule. This process increases the number of cases to consider and
could, in principle, fail to terminate. However, a careful analysis
shows that this is not the case. In total, 139 relations are contained
in \cref{app:rels}.

\subsection{Normalization}

We start by labelling normal circuits. This labelling is convenient to
refer to specific parts of a circuit and will be useful to describe
our rewrite system.

\begin{definition}
Consider an $n$-qubit normal circuit
\[
\begin{qcircuit}[scale=0.3]
    \LABEL{1.4}{2.75}{$\vdots$}
    \LABEL{33.8}{2.75}{$\vdots$}
    \LABEL{8.6}{-0.45}{$\vdots$}
    \LABEL{23}{-0.45}{$\vdots$}        
    \blackgrid{-1.8}{42.2}{-1.6,0,1.6,3.2}  
    \LABEL{8.6}{1.6}{$\hspace{0.1em}\cdots$}
    \LABEL{23}{1.6}{$\hspace{0.1em}\cdots$}
    \LABEL{30.2}{0}{$\hspace{0.1em}\cdots$}        
    \GENERICORANGE{1.4}{0}{0}{$A$}
    \GENERICGREEN{5}{1.6}{0}{$B$}
    \GENERICGREEN{12.2}{3.2}{1.6}{$B$}
    \GENERICYELLOW{15.8}{3.2}{3.2}{$C$}
    \GENERICBLUE{19.4}{3.2}{1.6}{$D$}
    \GENERICBLUE{26.6}{1.6}{0}{$D$}
    \GENERICBLUE{33.8}{0}{-1.6}{$D$}
    \GENERICPINK{37.4}{-1.6}{-1.6}{$E$}
    \colbiggate{orange!20}{$N_{n-1}$}{39.2,0}{40.8,3.2}
\end{qcircuit}
\]
where $N_{n-1}$ is assumed to be a normal form on $(n-1)$ qubits. We
assign labels to specific wires in order to produce a \emph{labelled
normal form}
\[
\begin{qcircuit}[scale=0.3]
    \LABEL{1.4}{2.75}{$\vdots$}
    \LABEL{33.8}{2.75}{$\vdots$}
    \LABEL{8.6}{-0.45}{$\vdots$}
    \LABEL{23}{-0.45}{$\vdots$}        
    \blackgrid{-1.8}{42.2}{-1.6,0,1.6,3.2}  
    \wirelabel{-0.4}{-1.6}{1}
    \wirelabel{-0.4}{0}{1}
    \wirelabel{-0.4}{1.6}{1}
    \wirelabel{-0.4}{3.2}{1}
    \LABEL{8.6}{1.6}{$\hspace{0.1em}\cdots$}
    \LABEL{23}{1.6}{$\hspace{0.1em}\cdots$}
    \LABEL{30.2}{0}{$\hspace{0.1em}\cdots$}            
    \GENERICORANGE{1.4}{0}{0}{$A$}
    \wirelabel{3.2}{0}{2}
    \GENERICGREEN{5}{1.6}{0}{$B$}
    \wirelabel{6.8}{1.6}{2}
    \wirelabel{10.4}{1.6}{2}    
    \GENERICGREEN{12.2}{3.2}{1.6}{$B$}
    \wirelabel{14}{3.2}{2}
    \GENERICYELLOW{15.8}{3.2}{3.2}{$C$}
    \wirelabel{17.6}{3.2}{3}
    \wirelabel{15.8}{-1.6}{1}
    \wirelabel{15.8}{0}{1}
    \wirelabel{15.8}{1.6}{1}
    \GENERICBLUE{19.4}{3.2}{1.6}{$D$}
    \wirelabel{21.2}{1.6}{4}
    \wirelabel{24.8}{1.6}{4}    
    \GENERICBLUE{26.6}{1.6}{0}{$D$}
    \wirelabel{28.4}{0}{4}
    \wirelabel{32}{0}{4}    
    \GENERICBLUE{33.8}{0}{-1.6}{$D$}
    \wirelabel{35.6}{-1.6}{4}
    \GENERICPINK{37.4}{-1.6}{-1.6}{$E$}
    \wirelabel{37.4}{0}{1}
    \wirelabel{37.4}{1.6}{1}
    \wirelabel{37.4}{3.2}{1}    
    \colbiggate{orange!20}{$N_{n-1}$}{39.2,0}{40.8,3.2}    
\end{qcircuit}
\]
where $N_{(n-1)}$ is recursively labelled in the same
manner.
\end{definition}

\begin{definition}
  \label{def:dirty}
  \emph{Dirty normal forms} are obtained from normal forms by adding
  gates according to the following scheme.
  \begin{itemize}
  \item An $H$ gate can be placed on a wire labelled 1 or on a double
    wire labelled 2.
  \item A $Z$ gate can be placed on a wire labelled 1, 2, 3, or 4.
  \item An $X$ gate can be placed on a wire labelled 1 or 2.
  \item A $CZ $ gate can be placed on adjacent wires, provided that
    the bottom wire is labelled 1, and either the top wire is labelled
    1 or 3 or the top wire is a double wire labelled 2.
  \item A $CXZ $ gate can be placed on adjacent wires, provided that
    the bottom wire is labelled 1, and the top wire is a double wire
    labelled 2.
  \end{itemize}
  When discussing dirty normal forms, we call the $H$, $Z$, $X$, $CZ
  $, and $CXZ $ gates \emph{dirty}, while the gates of kind $A$, $B$,
  $C$, $D$, and $E$ are called \emph{clean}.
\end{definition}

Intuitively, dirty normal forms are circuits ``during the
normalization process''. We now explain how the relations can be used
to transform dirty normal forms into clean ones.

\begin{lemma}
\label{biglemma}
Any dirty normal form can be converted to its normal form by applying
the typed relations of \cref{def:rels} a finite number of times.
\end{lemma}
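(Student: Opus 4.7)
The plan is to induct on the number of qubits $n$, and, within a fixed $n$, on a well-founded measure on the positions of dirty gates in the outermost layer of the normal form. The base case $n=1$ is immediate: a $1$-qubit normal form consists only of $A$, $C$, and $E$ gates, and the possible dirty gates ($H$, $Z$, $X$) in the few legal positions permitted by \cref{def:dirty} are handled by a finite family of relations from \cref{app:rels}, each of which absorbs the dirty gate into an adjacent clean gate.

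For the inductive step, I would process dirty gates strictly from left to right through the outermost layer of the normal form described in \cref{def:normal}, namely the $Z$-circuit block $L_n$ (gates $A, B, \ldots, B, C$) followed by the $X$-circuit block $M_n$ (gates $D, \ldots, D, E$), with a labelled sub-normal-form $N_{n-1}$ to its right. The core step is the following local rewriting: whenever a dirty gate $G$ sits immediately to the left of a clean gate $K$, apply the appropriate relation from \cref{app:rels} to rewrite $GK$ either as a single modified clean gate (absorbing $G$), or as $K'$ followed by one or more dirty gates of possibly different kinds, each landing on a wire whose label respects the constraints of \cref{def:dirty}. After finitely many such rewrites, every dirty gate will have been pushed past the entire outer layer $L_nM_n$ and will sit in the region occupied by $N_{n-1}$, on a wire labelled as required by \cref{def:dirty} applied recursively; at that point the inductive hypothesis for $n-1$ qubits finishes the job.

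To make this rigorous, I would define a rank for each dirty gate as a pair $(k,p)$ where $k$ measures how deeply nested the dirty gate is (with the outermost layer given the largest value and deeper sub-normal-forms $N_{n-1}, N_{n-2}, \ldots$ smaller values), and $p$ measures the horizontal distance from the right end of its current layer. I would then take as my termination measure the multiset of ranks of all dirty gates present, ordered by the multiset (Dershowitz–Manna) extension of the lexicographic order on $(k,p)$. Every relation from \cref{app:rels} strictly decreases this measure: an absorption removes a gate, a rightward push decreases $p$, and a descent into a deeper layer decreases $k$. Since the measure is well-founded, the rewriting process terminates after finitely many steps, necessarily in a dirty normal form with no dirty gates, that is, in a normal form.

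The main obstacle is verifying the completeness of the relation set in \cref{app:rels}, namely that for every legally placed dirty gate sitting to the immediate left of every clean gate, a matching relation exists whose right-hand side places all resulting dirty gates on wires that remain valid under \cref{def:dirty}. This is a finite but large combinatorial check: for each kind of dirty gate ($H$, $Z$, $X$, $CZ$, $CXZ$), each position allowed by \cref{def:dirty}, and each of the $3+8+2+4+2=19$ kinds of typed clean derived generators, one must confirm that the adjacent rewrite from \cref{app:rels} applies and that its output falls within the dirty normal form schema. Once this case analysis is settled, the termination and progress argument above delivers the lemma.
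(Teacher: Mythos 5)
Your proposal is correct and follows essentially the same strategy as the paper's proof: observe that the typed relations cover every configuration of a dirty gate immediately preceding a clean gate, that each rewrite preserves the dirty-normal-form shape, and that a well-founded measure strictly decreases with each left-to-right application. The only cosmetic difference is the termination measure — you use a multiset of nesting-depth/position ranks under the Dershowitz–Manna ordering, whereas the paper uses the lexicographic order on the sequence $(s_1,\ldots,s_t)$ counting dirty gates before each clean gate (with $t$ bounded by $n^2+2n$) — and your additional outer induction on $n$ is not needed once such a global measure is in place.
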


\begin{proof}
  By \cref{def:dirty}, every dirty gate occurs before a clean
  gate. Thus, if dirty gates remain in the circuit, a dirty gate must
  occur immediately before a clean one. The left-hand side of the
  typed relations of \cref{def:rels} contain all cases of a dirty gate
  occurring immediately before a clean gate. Hence, as long as dirty
  gates remain, one of the rules can be applied. Moreover, each rule
  takes a dirty normal form to a dirty normal form. We now show that
  this procedure terminates in a finite number of steps. To this end,
  we associate a sequence of nonnegative integers numbers to each
  dirty normal form. Suppose a dirty normal form has $t$ clean gates,
  indexed $1,\ldots,t$ left to right. Now define the sequence $s =
  (s_1,\ldots,s_t)$ where $s_i$ is the number of dirty gates that
  occur before the $i$-th clean gate. A left-to-right application of
  one of the typed relations decreases $s$ lexicographically. The
  length of this sequence might not remain constant through the
  normalization process but it is bounded by the maximum possible
  number of clean gates in a circuit. For a normal form on $n$-qubits,
  this bound is given by
  \[
  \sum_{i=1}^{n} 2i+1 = n^{2}+2n.
  \]
  Hence, this process terminates in a finite number of rewrites.
\end{proof}

\begin{proposition}
  \label{prop:normalization}
  Any Clifford circuit can be rewritten into its normal form using the
  typed relations of \cref{def:rels}.
\end{proposition}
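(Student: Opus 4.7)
The strategy is the one previewed at the start of \cref{sec:rels}: reduce the claim to \cref{biglemma}, which already handles dirty normal forms. First, I would ensure that the typed relations include a rule rewriting the empty circuit on $n$ wires into the normal form $N_I$ for the identity operator (this is the first bullet point before \cref{def:rels}). Then, given an arbitrary Clifford circuit $C$, written as a composition of primitive generators $H$, $Z$, $CZ$ and the scalar $-1$, I would append $N_I$ on the right to obtain a circuit $C \cdot N_I$. Since $N_I$ represents the identity, this composite represents the same operator as $C$.

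Next, I would argue that $C \cdot N_I$ is a dirty normal form in the sense of \cref{def:dirty}. Every wire at the leftmost boundary of the clean normal form $N_I$ carries the label $1$, and \cref{def:dirty} permits $H$, $Z$, $X$, and (adjacent) $CZ$ gates on wires labelled $1$; the scalar $-1$ can be absorbed into the trailing $\pm 1$ factor allowed by \cref{def:normal}. Thus the gates of $C$ sit immediately before the clean normal form in permissible positions, so the composite is genuinely a dirty normal form. Applying \cref{biglemma} then rewrites $C \cdot N_I$ into a clean normal form $N$ in finitely many applications of the typed relations. Because each relation is an operator equality, $N$ represents the same operator as $C$, which is the conclusion.

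The main subtlety, and therefore the main obstacle, is verifying that the input circuit actually fits into the shape required by \cref{def:dirty}. Two-qubit gates such as $CZ$ are required to act on adjacent wires, and the scalar $-1$ must be treated correctly. To handle non-adjacent $CZ$ (or $CXZ$) gates one would first decompose them into sequences of adjacent two-qubit gates and $H$ gates using standard SWAP identities; since SWAPs themselves factor through $CZ$ and $H$, this decomposition is purely at the level of primitive generators and can be folded into the preprocessing step before appending $N_I$. Once this is done, the rewriting described in \cref{biglemma} is directly applicable, and the proposition follows.
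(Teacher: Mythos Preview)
Your proposal is correct and follows essentially the same approach as the paper: rewrite the empty circuit into the identity normal form using the typed relations, append it to the right of $C$ to obtain a dirty normal form (since all leftmost wires carry label~$1$), and then invoke \cref{biglemma}. Your additional remarks about the scalar $-1$ and non-adjacent $CZ$ gates go slightly beyond what the paper explicitly addresses, but they do not change the argument.
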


\begin{proof}
The normal form of the identity operator on $n$ qubits is 
\[
\begin{qcircuit}[scale=0.3]
    \LABEL{7.9}{3.2}{$\cdots$}
    \LABEL{7.9}{4.8}{$\cdots$}
    \LABEL{16.7}{3.2}{$\cdots$}
    \LABEL{16.7}{4.8}{$\cdots$}
    
    \blackgrid{0}{1.4}{0}
    \greengrid{1.4}{2.8}{0}
    \blackgrid{2.8}{26.6}{0}
    
    \blackgrid{0}{3.6}{1.6}
    \greengrid{3.6}{5.8}{1.6}
    \blackgrid{5.8}{26.6}{1.6}
    
    \blackgrid{0}{5.8}{3.2}
    \greengrid{5.8}{7.2}{3.2}
    \blackgrid{8.6}{16}{3.2}
    \blackgrid{17.4}{26.6}{3.2}
    
    \blackgrid{0}{7.2}{4.8}
    \greengrid{8.6}{10}{4.8}
    \blackgrid{10}{16}{4.8}
    \blackgrid{17.4}{26.6}{4.8}
    
    \blackgrid{0}{10}{6.4}
    \greengrid{10}{12.2}{6.4}
    \blackgrid{12.2}{26.6}{6.4}
    
    \AONE{1.4}{0}
    \BONE{3.6}{1.6}{0}
    \BONE{5.8}{3.2}{1.6}
    \BONE{10}{6.4}{4.8}
    \CONE{12.2}{6.4}
    \DONE{14.4}{6.4}{4.8}
    \DONE{18.6}{3.2}{1.6}
    \DONE{20.8}{1.6}{0}
    \EONE{23}{0}
    
    \colbiggate{orange!20}{$I$}{25.2,8}{25.2,0.4}
    
\end{qcircuit}
\]
where $I$ denotes the normal form for the identity on $n-1$
qubits. Using the typed relations of \cref{def:rels}, we can rewrite
the empty circuit on $n$ wires into the normal form for the
identity. Now consider a Clifford circuit $C$. Expanding the wires on
the right of $C$ into the normal form for the identity, we obtain a
dirty normal form. We can then convert this dirty normal form into a
normal form using \cref{biglemma}, which completes the proof.
\end{proof}

\subsection{A Reduced Set of Relations}
\label{ssec:compactrels}

\cref{autosimulate,prop:normalization} jointly show that real
stabilizers are presented by the generators $(-1)$, $H$, $Z$, and $CZ
$ and the typed relations of \cref{def:rels} (where each derived
generator is replaced by its definition and types are forgotten). This
presentation is highly redundant and, in this final section, we
provide a reduced set of relations.

\begin{definition}
  \label{def:reducedrels}
  The \emph{reduced relations} for real stabilizers are given in
  \cref{fig:compactrels}.
\end{definition}

\begin{figure}
  \input{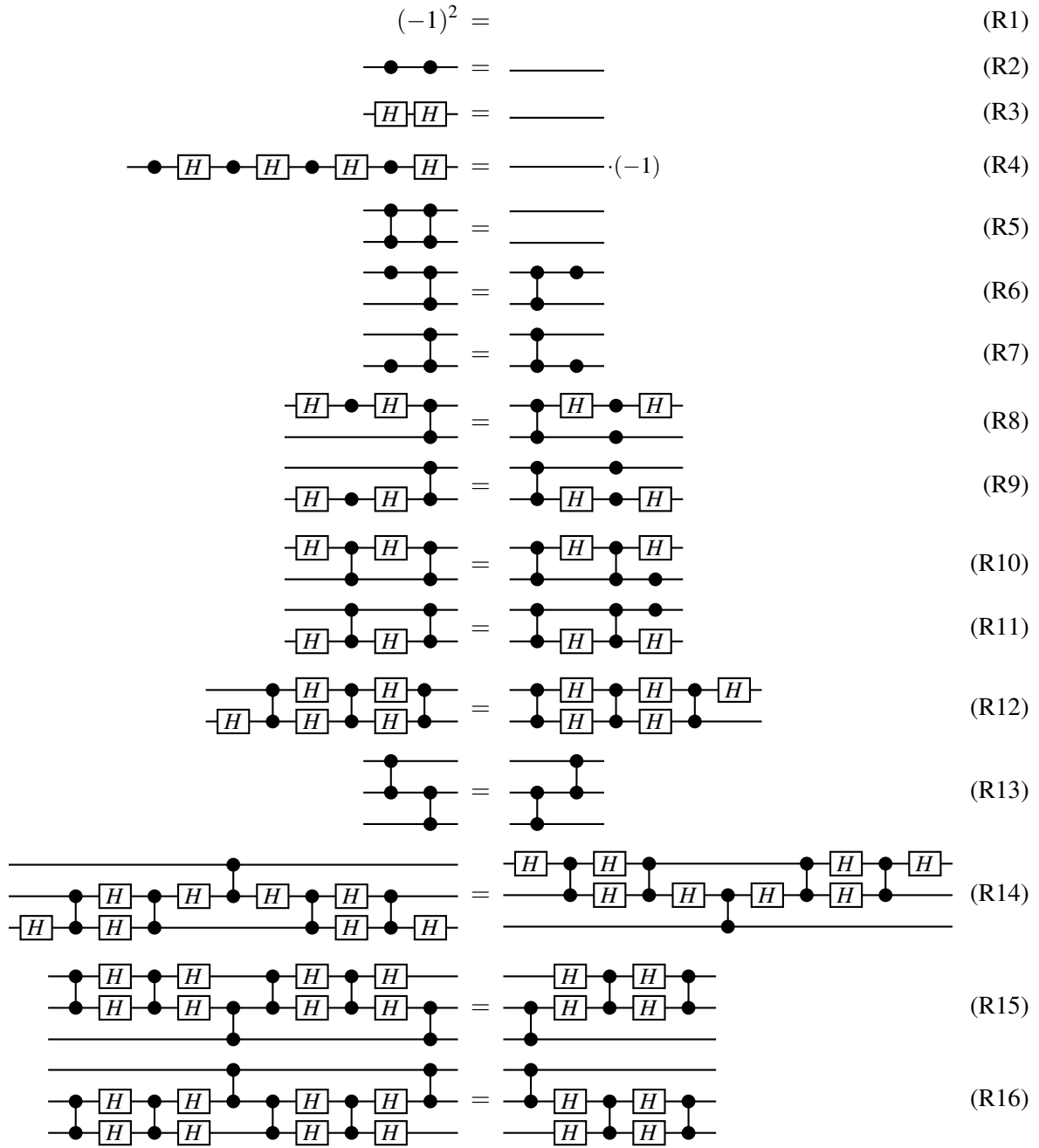}
  \caption{A set of reduced relations for real stabilizers.}
  \label{fig:compactrels}
\end{figure}

\begin{proposition}
  \label{prop:compactnormalization}
  Any Clifford circuit can be rewritten into its normal form using the
  reduced relations of \cref{def:reducedrels}.
\end{proposition}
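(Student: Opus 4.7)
The plan is to reduce the claim to \cref{prop:normalization}. That result already guarantees that any Clifford circuit can be rewritten to its (unique) normal form using the 139 typed relations of \cref{def:rels}. Thus it suffices to show that every typed relation of \cref{def:rels} is derivable from the reduced relations (R1)--(R15), after replacing each derived generator by its definition (in terms of $H$, $Z$, $CZ$, $X$, and $CXZ$) and forgetting the wire types. Once every typed relation is so derived, any normalization sequence witnessing \cref{prop:normalization} can be simulated step-by-step using only the reduced relations.

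The first step is to establish a small toolbox of ``housekeeping'' derived equalities from (R1)--(R15). From (R1)--(R4) we obtain the involutivity of $(-1)$, $Z$, $H$, and $CZ$, together with the scalar identity $(ZH)^4 = -1$. From (R5)--(R6) we obtain the expected commutations of $Z$ with $CZ$, which by conjugating with $H$ and combining with (R7)--(R10) yield commutations of $X$ with $CZ$ and the basic rules for moving $H$-conjugates of $CZ$ past each other. From (R11) we get locality of $CZ$ on disjoint qubit pairs, and (R12)--(R15) provide the nontrivial ``Euler'' type identities needed to push $CZ$ across the controlled-$XZ$ subcircuits and to swap orderings of nested $CZ$-$H$ blocks. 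From these, by repeated rewriting, one derives the full set of $X$-versions of the above commutations (using $X = HZH$) and the identities governing the derived gate $CXZ$ (expanded as $(H \otimes I)CZ(H \otimes I)CZ$).

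Next, I would classify the 139 typed relations of \cref{app:rels} according to the pair (dirty gate, clean gate) that appears on the left. The clean gates $A_i$, $B_j$, $C_i$, $D_j$, $E_i$ are all short fixed circuits in $H$, $Z$, $CZ$ (and $X$, $CXZ$). Each typed relation therefore unfolds, after expansion, into an equation between two circuits of modest size whose entries are $H$, $Z$, $CZ$. Using the toolbox above, each such equation is handled by mechanically pushing the dirty gate rightward through the expanded clean gate, absorbing it into the next slot and restoring the shape of a (dirty) normal form on the right-hand side. Many of the 139 cases collapse together: commutations of $Z$ with any clean gate reduce to iterated applications of (R5)--(R6); $H$-versus-$B$, $H$-versus-$D$, and $CZ$-versus-$B$, $CZ$-versus-$D$ cases are exactly what (R7)--(R15) were designed to handle; the $X$ and $CXZ$ cases reduce to the $H$ and $CZ$ cases after using $X = HZH$ and the expansion of $CXZ$.

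The main obstacle will be the multi-qubit cases involving $CZ$ meeting a $B$-gate or $D$-gate on a different pair of wires (the ``triangular'' interactions), for which the reduced relations (R12)--(R15) are tailored. Verifying these requires carefully peeling Hadamards off the outside of the clean block, applying (R13)--(R15) in the middle, and reabsorbing the Hadamards; I expect this to be the most delicate bookkeeping. All other cases are straightforward rearrangements once the toolbox is in place. Having thus derived every typed relation from the reduced ones, the proposition follows immediately from \cref{prop:normalization}.
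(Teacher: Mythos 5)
Your proposal takes essentially the same route as the paper: both reduce the claim to \cref{prop:normalization} by arguing that it suffices to derive each of the 139 typed relations of \cref{def:rels} from the reduced relations of \cref{def:reducedrels}. The paper simply defers the actual derivations to a supplementary document rather than carrying them out, and your sketch of how those derivations would be organized is a reasonable outline of the same (unavoidably computational) verification.
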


\begin{proof}
It suffices to show that the reduced relations of
\cref{def:reducedrels} imply the typed relations of
\cref{def:rels}. The derivations can be found in the supplement to this paper
\cite{SupplementRealStab}.
\end{proof}

An alternative set of reduced relations is given in
\cref{app:relsalt}. This last collection of relations is stated in
terms of the $(-1)$, $H$, $Z$, $CZ $, $X$, and $CX$ gates and is
included because it makes for an arguably more intuitive presentation.

\section{Conclusion}
\label{sec:conc}

In this paper, we defined a normal form for real stabilizer circuits,
showed that every real stabilizer operator admits a unique normal
form, and introduced a set of relations that suffices to rewrite any
real stabilizer circuit into its normal form. This yields a
presentation by generators and relations of real stabilizer
operators. Our results add to the growing family of quantum operators
for which such presentations are known (see \cite{Amy16,Selinger15}
for unitary quantum circuits and, for example,
\cite{Backens2014TheZI,Comfort19,Vilmart2018AZW} in more general
contexts). Our approach in this work followed that of
\cite{Selinger15}. However, we did not leverage the presentation given
in \cite{Selinger15} for complex stabilizers in any systematic way. We
plan to explore this connection in future work, with the hope of
devising more general methods for the construction of presentations
such as the one provided here.

\bibliographystyle{eptcs} \bibliography{realstab}

\appendix

\section{An Alternative Reduced Set of Relations}
\label{app:relsalt}

\begin{figure}[h]
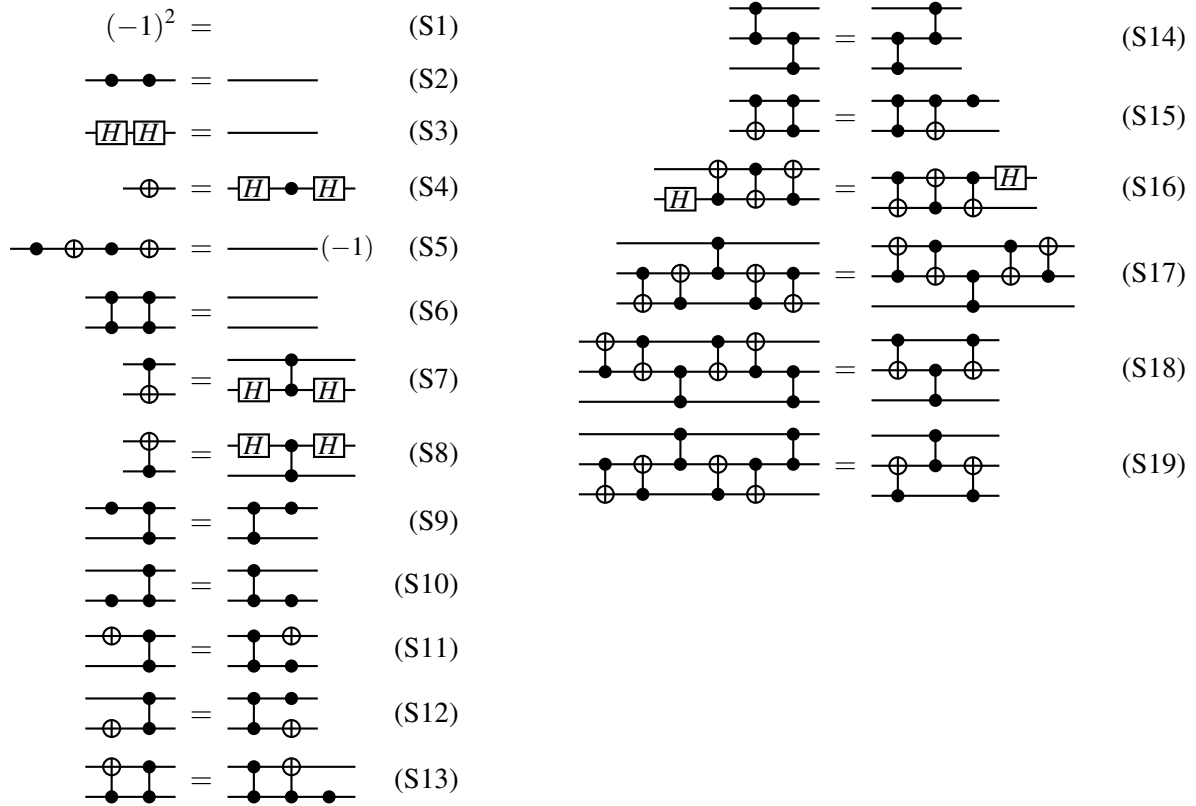

  \begingroup
  \setcounter{equation}{0}
  \renewcommand{\theequation}{S\arabic{equation}}
  \addtolength{\jot}{0.3em}
  \def\scale{0.25}
  \noindent
  \begin{minipage}[t]{0.4\textwidth}
  \begin{align}
    (-1)^2 &~=
    \\
    \m{\begin{qcircuit}[scale=\scale]
      \blackgrid{0}{4.8}{0}
      \ZGATE{1.4}{0}
      \ZGATE{3.4}{0}
    \end{qcircuit}}
    &~=\m{\begin{qcircuit}[scale=\scale]
        \blackgrid{0}{4.8}{0}
    \end{qcircuit}}
    \\
    \m{\begin{qcircuit}[scale=\scale]
      \blackgrid{0}{4.8}{0}
      \HGATE{1.4}{0}
      \HGATE{3.4}{0}
    \end{qcircuit}}
    &~=\m{\begin{qcircuit}[scale=\scale]
        \blackgrid{0}{4.8}{0}
    \end{qcircuit}}
    \\
    \m{\begin{qcircuit}[scale=\scale]
      \blackgrid{0}{2.8}{0}
      \XGATE{1.4}{0}
    \end{qcircuit}}
    &~=\m{\begin{qcircuit}[scale=\scale]
        \blackgrid{0}{6.8}{0}
        \HGATE{1.4}{0}
        \ZGATE{3.4}{0}
        \HGATE{5.4}{0}    
      \end{qcircuit}}
    \\
    \m{\begin{qcircuit}[scale=\scale]
      \blackgrid{0}{8.8}{0}
      \ZGATE{1.4}{0}
      \XGATE{3.4}{0}
      \ZGATE{5.4}{0}
      \XGATE{7.4}{0}
    \end{qcircuit}}
    &~=\m{\begin{qcircuit}[scale=\scale]
        \LABEL{6.2}{0}{~$(-1)$}
        \blackgrid{0}{4.8}{0}
    \end{qcircuit}}
    \\
    \m{\begin{qcircuit}[scale=\scale]
      \blackgrid{0}{4.8}{0,1.6}
      \CZGATE{3.4}{1.6}{0}
      \CZGATE{1.4}{1.6}{0}
    \end{qcircuit}}
    &~=\m{\begin{qcircuit}[scale=\scale]
        \blackgrid{0}{4.8}{0,1.6}
    \end{qcircuit}}
    \\
    \m{\begin{qcircuit}[scale=\scale]
      \blackgrid{0}{2.8}{0,1.6}
      \CXGATE{1.4}{1.6}{0}
    \end{qcircuit}}
    &~=\m{\begin{qcircuit}[scale=\scale]
    \blackgrid{0}{6.8}{0,1.6}
    \HGATE{1.4}{0}
    \CZGATE{3.4}{0}{1.6}
    \HGATE{5.4}{0}
    \end{qcircuit}}
    \\
    \m{\begin{qcircuit}[scale=\scale]
      \blackgrid{0}{2.8}{0,1.6}
      \XCGATE{1.4}{1.6}{0}
    \end{qcircuit}}
    &~=\m{\begin{qcircuit}[scale=\scale]
    \blackgrid{0}{6.8}{0,1.6}
    \HGATE{1.4}{1.6}
    \CZGATE{3.4}{0}{1.6}
    \HGATE{5.4}{1.6}
    \end{qcircuit}}
    \\
    \m{\begin{qcircuit}[scale=\scale]
      \blackgrid{0}{4.8}{0,1.6}
      \CZGATE{3.4}{1.6}{0}
      \ZGATE{1.4}{1.6}
    \end{qcircuit}}
    &~=\m{\begin{qcircuit}[scale=\scale]
    \blackgrid{0}{4.8}{0,1.6}
    \CZGATE{1.4}{1.6}{0}
    \ZGATE{3.4}{1.6}
    \end{qcircuit}}
    \\
    \m{\begin{qcircuit}[scale=\scale]
      \blackgrid{0}{4.8}{0,1.6}
      \CZGATE{3.4}{1.6}{0}
      \ZGATE{1.4}{0}
    \end{qcircuit}}
    &~=\m{\begin{qcircuit}[scale=\scale]
    \blackgrid{0}{4.8}{0,1.6}
    \CZGATE{1.4}{1.6}{0}
    \ZGATE{3.4}{0}
    \end{qcircuit}}
    \\
    \m{\begin{qcircuit}[scale=\scale]
      \blackgrid{0}{4.8}{0,1.6}
      \XGATE{1.4}{1.6}
      \CZGATE{3.4}{1.6}{0}
    \end{qcircuit}}
    &~=\m{\begin{qcircuit}[scale=\scale]
    \blackgrid{0}{4.8}{0,1.6}
    \CZGATE{1.4}{1.6}{0}
    \XGATE{3.4}{1.6}
    \ZGATE{3.4}{0}
    \end{qcircuit}}
    \\
    \m{\begin{qcircuit}[scale=\scale]
      \blackgrid{0}{4.8}{0,1.6}
      \XGATE{1.4}{0}
      \CZGATE{3.4}{1.6}{0}
    \end{qcircuit}}
    &~=\m{\begin{qcircuit}[scale=\scale]
    \blackgrid{0}{4.8}{0,1.6}
    \CZGATE{1.4}{1.6}{0}
    \XGATE{3.4}{0}
    \ZGATE{3.4}{1.6}
    \end{qcircuit}}
    \\
    \m{\begin{qcircuit}[scale=\scale]
      \blackgrid{0}{4.8}{0,1.6}
      \XCGATE{1.4}{1.6}{0}
      \CZGATE{3.4}{1.6}{0}
    \end{qcircuit}}
    &~=\m{\begin{qcircuit}[scale=\scale]
    \blackgrid{0}{6.8}{0,1.6}
    \CZGATE{1.4}{1.6}{0}
    \XCGATE{3.4}{1.6}{0}
    \ZGATE{5.4}{0}
    \end{qcircuit}}
  \end{align}
  \end{minipage}%
  \hspace{0.05\textwidth}
  \begin{minipage}[t]{0.55\textwidth}
  \begin{align}
    \m{\begin{qcircuit}[scale=\scale]
      \blackgrid{0}{4.8}{0,1.6,3.2}
      \CZGATE{1.4}{3.2}{1.6}
      \CZGATE{3.4}{1.6}{0}
    \end{qcircuit}}
    &~=\m{\begin{qcircuit}[scale=\scale]
    \blackgrid{0}{4.8}{0,1.6,3.2}
    \CZGATE{3.4}{3.2}{1.6}
    \CZGATE{1.4}{1.6}{0}
    \end{qcircuit}}
    \\
    \m{\begin{qcircuit}[scale=\scale]
      \blackgrid{0}{4.8}{0,1.6}
      \CXGATE{1.4}{1.6}{0}
      \CZGATE{3.4}{1.6}{0}
    \end{qcircuit}}
    &~=\m{\begin{qcircuit}[scale=\scale]
    \blackgrid{0}{6.8}{0,1.6}
    \CZGATE{1.4}{1.6}{0}
    \CXGATE{3.4}{1.6}{0}
    \ZGATE{5.4}{1.6}
    \end{qcircuit}}
    \\
    \m{\begin{qcircuit}[scale=\scale]
      \blackgrid{0}{8.8}{0,1.6}
      \HGATE{1.4}{0}
      \XCGATE{3.4}{1.6}{0}
      \CXGATE{5.4}{1.6}{0}
      \XCGATE{7.4}{1.6}{0}
    \end{qcircuit}}
    &~=\m{\begin{qcircuit}[scale=\scale]
    \blackgrid{0}{8.8}{0,1.6}
    \CXGATE{1.4}{1.6}{0}
    \XCGATE{3.4}{1.6}{0}
    \CXGATE{5.4}{1.6}{0}
    \HGATE{7.4}{1.6}    
    \end{qcircuit}}
    \\
    \m{\begin{qcircuit}[scale=\scale]
      \blackgrid{0}{10.8}{0,1.6,3.2}
      \CXGATE{1.4}{1.6}{0}
      \XCGATE{3.4}{1.6}{0}
      \CZGATE{5.4}{3.2}{1.6}
      \XCGATE{7.4}{1.6}{0}
      \CXGATE{9.4}{1.6}{0}
    \end{qcircuit}}
    &~=\m{\begin{qcircuit}[scale=\scale]
    \blackgrid{0}{10.8}{0,1.6,3.2}  
    \XCGATE{1.4}{3.2}{1.6}
    \CXGATE{3.4}{3.2}{1.6}
    \CZGATE{5.4}{1.6}{0}
    \CXGATE{7.4}{3.2}{1.6}
    \XCGATE{9.4}{3.2}{1.6}
    \end{qcircuit}}
    \\ 
    \m{\begin{qcircuit}[scale=\scale]
      \blackgrid{0}{12.8}{0,1.6,3.2}
      \XCGATE{1.4}{3.2}{1.6}
      \CXGATE{3.4}{3.2}{1.6}
      \CZGATE{5.4}{1.6}{0}
      \CXGATE{7.4}{3.2}{1.6}
      \XCGATE{9.4}{3.2}{1.6}
      \CZGATE{11.4}{1.6}{0}
    \end{qcircuit}}
    &~=\m{\begin{qcircuit}[scale=\scale]
    \blackgrid{0}{6.8}{0,1.6,3.2}
    \CXGATE{1.4}{3.2}{1.6}
    \CZGATE{3.4}{1.6}{0}
    \CXGATE{5.4}{3.2}{1.6}
    \end{qcircuit}}
    \\ 
    \m{\begin{qcircuit}[scale=\scale]
      \blackgrid{0}{12.8}{0,1.6,3.2}
      \CXGATE{1.4}{1.6}{0}
      \XCGATE{3.4}{1.6}{0}
      \CZGATE{5.4}{3.2}{1.6}
      \XCGATE{7.4}{1.6}{0}
      \CXGATE{9.4}{1.6}{0}
      \CZGATE{11.4}{3.2}{1.6}
    \end{qcircuit}}
    &~=\m{\begin{qcircuit}[scale=\scale]
    \blackgrid{0}{6.8}{0,1.6,3.2}
    \XCGATE{1.4}{1.6}{0}
    \CZGATE{3.4}{3.2}{1.6}
    \XCGATE{5.4}{1.6}{0}
    \end{qcircuit}}
  \end{align}
  \end{minipage}
  \endgroup
  \bigskip
  
  \caption{An alternative set of reduced relations for real
    stabilizers.}
  \label{fig:relsalt}
\end{figure}

\section{Typed Relations}
\label{app:rels}

\vspace{-2ex}
\begin{figure}[H]
\[
\arraycolsep=10.5pt\def\arraystretch{2}

\]
\caption{Rewrite rules for normal forms, part VIII.\label{fig:part8}}
\end{figure}

\clearpage

\section{Proofs of Propositions 4.11, 4.12, 4.13, 4.14, and 4.15}
\label{app:nfproofs}

\textbf{Proposition 4.11}
  Let P be a $n$-qubit Pauli operator, with $P = P_{1} \otimes P_{2}
  \otimes \cdots \otimes P_{n}$, $P^2 = I$, and $P \neq \pm I$. Then
  there exists a unique $Z$-circuit $L$ such that $L \bullet P = Z
  \otimes I \otimes \cdots \otimes I$.

\begin{proof}
Since $P \neq \pm I$, there is an index $m$ such that $P_{m} \neq \pm
I$. Let $m$ be the largest such index. Then $P_{m} = \pm X,P_{m} = \pm
Z,$ or $P_{m} = \pm XZ$. With this, we consider the following diagram.
\[
\begin{qcircuit}[scale=0.3]
    \LABEL{-1.4}{0}{$\pm P_{m}$}
    \LABEL{-1.4}{1.6}{$P_{m-1}$}
    \LABEL{-1.4}{3.2}{$P_{m-2}$}
    \LABEL{-1.4}{4.8}{$P_{2}$}
    \LABEL{-1.4}{6.4}{$P_{1}$}
    \LABEL{-1.4}{-1.6}{$I$}
    \LABEL{-1.4}{-3.2}{$I$}
    \LABEL{0.6}{-2.2}{$\vdots$}
    \LABEL{33.4}{-2.2}{$\vdots$}
    
    \LABEL{35.4}{6.4}{$Z$}
    \LABEL{35.4}{4.8}{$I$}
    \LABEL{35.4}{3.2}{$I$}
    \LABEL{35.4}{1.6}{$I$}
    \LABEL{35.4}{0}{$I$}
    \LABEL{35.4}{-1.6}{$I$}
    \LABEL{35.4}{-3.2}{$I$}
    
    \LABEL{0.6}{4.3}{$\vdots$}
    \LABEL{33.4}{4.3}{$\vdots$}
    \LABEL{4.2}{0}{$\pm V_{1}$}
    \LABEL{10}{1.6}{$\pm V_{2}$}
    \LABEL{15.6}{3.2}{$\pm V_{3}$}
    \LABEL{19.2}{3.2}{$\cdots$}
    \LABEL{19.2}{4.8}{$\cdots$}
    \LABEL{23.5}{4.8}{$\pm V_{m-1}$}
    \LABEL{29.8}{6.4}{$\pm Z$}
    
    \blackgrid{0}{1.4}{0}
    \blackgrid{1.4}{2.8}{0}
    \blackgrid{5.6}{7}{0}
    \blackgrid{7}{34.7}{0}
    
    \blackgrid{0}{7.1}{1.6}
    \blackgrid{7.1}{8.5}{1.6}
    \blackgrid{11.3}{12.7}{1.6}
    \blackgrid{12.7}{34.7}{1.6}
    
    \blackgrid{0}{18.4}{4.8}
    \blackgrid{19.8}{21.2}{4.8}
    \blackgrid{25.6}{27}{4.8}
    \blackgrid{27}{34.7}{4.8}
    
    \blackgrid{0}{12.8}{3.2}
    \blackgrid{12.8}{14.2}{3.2}
    \blackgrid{17}{18.4}{3.2}
    \blackgrid{19.8}{34.7}{3.2}
    
    \blackgrid{0}{27}{6.4}
    \greengrid{27}{28.4}{6.4}
    \greengrid{31.2}{32.6}{6.4}
    \blackgrid{32.6}{34.7}{6.4}
    
    \blackgrid{0}{34.7}{-1.6}
    \blackgrid{0}{34.7}{-3.2}
    
    \GENERICORANGE{1.4}{0}{0}{$A$}
    \GENERICGREEN{7.1}{1.6}{0}{$B$}
    \GENERICGREEN{12.8}{3.2}{1.6}{$B$}
    \GENERICGREEN{27}{6.4}{4.8}{$B$}
    \GENERICYELLOW{32.6}{6.4}{6.4}{$C$}
\end{qcircuit}
\]
In the above diagram, the $V_s$ are Pauli operators such that $V_{s}
\in \{Z,XZ\}$ and are determined in the following way. By
Proposition 4.7, if $P_{m}= \pm X, \pm Z$, there is a unique $A$
gate $A_{g}$ with output of single type such that $A_{g} \bullet P_{m}
= \pm Z$. If $P_{m}= \pm XZ$, there is a unique $A$ gate $A_{r}$ with
output of double type such that $A_{r} \bullet P_{m} = \pm XZ$. So the
$A$ gate is uniquely determined. Furthermore after the application of
the $A$ gate, we either have $V_1=\pm XZ$ on a wire of double type or
$V_1=\pm Z$ on a wire of single type. We will further use the actions
in Proposition 4.7 to move these $Z$ or $XZ$ Pauli operators up the
qubits.

By inspection of these actions, we see that for each choice of
$P_{m-1} \otimes V_{1}$, there is a unique $B$ gate $B_{j}$ such that
$B_{j} \bullet P_{m-1} \otimes V_{1} = V_{2} \otimes I$ and $V_{2} =
Z$ or $V_{2} = XZ$. If $V_2=Z$, the output wire is of single type, and
if $V_2=XZ$, the output wire is of double type. We can continue this
process up to the top qubit and this will produce a $Z$-circuit if we
can ensure that the top output wire is of single type. Since $P^{2}
=I$, there are evenly many indices $i$ such that $P_{i}=XZ$, which are
in effect cancelled out by an application of $B_{8}$, switching back to
$Z$ along a wire of single type. Thus we will always end up
constructing a circuit $C$ such that $C \bullet P = \pm Z \otimes I
\otimes \ldots \otimes I$, to which there is a unique $C$-gate $C_{k}$
such that $C_{k} \bullet \pm Z = Z$. This completes the proof of
existence. 

Note that every choice of gate is unique with respect to kind and
type. If our normal form was constructed the same way in the absence
of types, uniqueness with respect to kind would be sufficient for a
unique $Z$-circuit. Here, with uniqueness with respect to kind and
type, we must prove that no two $Z$-circuits describing an action as
above can have different typing schemes. Consider two $Z$-circuits $C$
and $D$ that correspond to the diagram above, such that $C \bullet P =
D \bullet P = Z \otimes I \otimes \ldots \otimes I$. We now show that
they have the same typing schemes. Note that both $A$ gates in $C$ and
$D$ must satisfy $A \bullet P_{m} = \pm V_{1}$, where $V_{1} = \pm Z,
\pm XZ$. $A_{2}$ is the only $A$ gate such that $A \bullet \pm X = \pm
Z$, and the equations $A \bullet \pm Z = \pm Z$ and $A \bullet XZ =
\pm XZ$ both have two $A$ gates with these properties, $A_{1}$ and
$A_{3}$. Both of these gates are different with respect to output
type, but represent the same actions. When an $A_{1}$ is chosen as the
$A$ gate, there is an even number of gates from the set
$\{B_{4},B_{8}\}$ which appear to its right, as these $B$ gates switch
the type up the ladder. If $A_{3}$ is chosen as the $A$ gate, then
there is an odd number of gates from the set $\{B_{4},B_{8}\}$ which
appear to its right.  Thus it is not possible for both circuits $C$
and $D$ to start with the different $A$ gates $A_{1}$ and $A_{3}$
respectively, as it is not possible for both resulting circuits to
have $C \bullet P = D \bullet P = Z \otimes I \ldots \otimes I$ with a
different number of occurrences of a given local action. Hence, $C$
and $D$ share the same $A$ gate, and have the same starting type. Note
that if the input type is given, there are four choices of possible
local actions of $B \bullet P_{m-j} \otimes V_{j} = V_{j+1} \otimes
I$, corresponding to $B_{1},B_{2},B_{3},B_{4}$ in the case of a single
type, and $B_{5},B_{6},B_{7},B_{8}$ in the case of a double
type. Since the output type of $A$ is given, and we must satisfy the
equations $B \bullet P_{m-j} \otimes V_{j} = V_{j+1} \otimes I$, there
are four choices for four possibilities at each choice of $B$, which
all describe different actions. Here we see that with a shared $A$
gate, both $Z$-circuits $C$ and $D$ must also have the same $B$ gates,
and thus the same typing scheme, ending in a single type, with the
corresponding unique choice of a $C$ gate such that $C \bullet \pm Z =
Z$. Hence we have that $C$ and $D$ have the same typing scheme. Since
the typing schemes must be the same, all local actions must
coincide. Hence the two $Z$-circuits are equal. This proves
uniqueness.
\end{proof}
\noindent\textbf{Proposition 4.12}
  Let $Q$ be an $n$-qubit Pauli operator with $Q= Q_1\otimes
  Q_2\otimes \cdots \otimes Q_n$, $Q^{2}=I$, $Q \neq \pm I$, and $Q$
  anticommutes with $Z \otimes I \otimes \cdots \otimes I$. Then there
  exists a unique $X$-circuit $M$ such that $M \bullet Q = I \otimes
  \cdots \otimes I \otimes X$.

\begin{proof}
  Since $Q$ anticommutes with $Z \otimes I \otimes \cdots \otimes I$,
  we have $Q_{1} = \pm XZ$ or $Q_{1} = \pm X$. With this, consider the
  diagram
\[
\begin{qcircuit}[scale=0.3]
    \LABEL{0.6}{2.65}{$\vdots$}
    \LABEL{26.2}{2.65}{$\vdots$}
  
    \blackgrid{0}{26.8}{6.4}
    
    \blackgrid{0}{2.8}{4.8}
    \blackgrid{5.6}{26.8}{4.8}
    
    \blackgrid{0}{8.4}{3.2}
    \blackgrid{11.2}{12.8}{3.2}
    \blackgrid{14.2}{26.8}{3.2}
    
    \blackgrid{0}{11.2}{1.6}
    \blackgrid{11.2}{12.8}{1.6}
    \blackgrid{14.2}{15.2}{1.6}
    \blackgrid{18.7}{26.8}{1.6}
    
    \blackgrid{0}{21.2}{0}
    \blackgrid{24}{26.8}{0}
    
    \LABEL{17}{1.6}{$\pm V_{n-1}$}
    \LABEL{13.4}{1.6}{$\cdots$}
    \LABEL{13.4}{3.2}{$\cdots$}
    \LABEL{4.3}{4.8}{$\pm V_{1}$}
    \LABEL{9.8}{3.2}{$\pm V_{2}$}
    \LABEL{22.6}{0}{$\pm X$}
    
    \LABEL{-1.6}{6.4}{$Q_{1}$}
    \LABEL{-1.6}{4.8}{$Q_{2}$}
    \LABEL{-1.6}{3.2}{$Q_{3}$}
    \LABEL{-1.6}{1.6}{$Q_{n-1}$}
    \LABEL{-1.6}{0}{$Q_{n}$}
    
    \LABEL{28.2}{6.4}{$I$}
    \LABEL{28.2}{4.8}{$I$}
    \LABEL{28.2}{3.2}{$I$}
    \LABEL{28.2}{1.6}{$I$}
    \LABEL{28.2}{0}{$X$}
    
    \GENERICBLUE{1.4}{6.4}{4.8}{$D$}
    \GENERICBLUE{7}{4.8}{3.2}{$D$}
    \GENERICBLUE{19.8}{1.6}{0}{$D$}
    \GENERICPINK{25.4}{0}{0}{$E$}
    
\end{qcircuit}
\]
where the $V_s$ are Pauli operators such that $V_{s} \in \{X,XZ\}$,
and are determined by the $Q_{i}$ as in \textbf{Proposition 4.11}. By
\textbf{Proposition 4.7}, the $D$ gates push $X$ and $XZ$ gates down the
qubits until we encounter $XZ \otimes XZ$, at which point we apply
$D_{4}$. There is always a unique $D$ gate to perform the needed
action, leaving $V_{1} = X, XZ$. We continue the same process down to
the bottom qubit. Again since $Q^{2} = I$, by \textbf{Proposition 2.2},
there are evenly many indices $t$ such that $Q_{t} = XZ$. These
occurrences of $XZ$ get cancelled out in pairs, ensuring that we are
left with an $\pm X$ on the bottom qubit. By \textbf{Proposition 4.7}, there
is a unique $E$ gate $E_{h}$ such that $E_{h} \bullet \pm X = X$. Thus
we are left with $I \otimes \cdots \otimes I \otimes X$ and our
circuit is an $X$-circuit. Furthermore, since each gate was unique
with respect to kind, the circuit is uniquely determined.
\end{proof}

\noindent\textbf{Proposition 4.13}
  Every $X$-circuit $M$ satisfies $M \bullet (Z \otimes I \otimes
  \cdots \otimes I) = I \otimes \cdots \otimes I \otimes Z$.

\begin{proof}
The claim follows from the actions described in \textbf{Proposition 4.7} with
respect to the diagram below.
\[
\begin{qcircuit}[scale=0.3]
    \LABEL{0.6}{2.65}{$\vdots$}
    \LABEL{26.2}{2.65}{$\vdots$}
 
    \blackgrid{0}{26.8}{6.4}
    
    \blackgrid{0}{2.8}{4.8}
    \blackgrid{5.6}{26.8}{4.8}
    
    \blackgrid{0}{8.4}{3.2}
    \blackgrid{11.2}{12.8}{3.2}
    \blackgrid{14.2}{26.8}{3.2}
    
    \blackgrid{0}{11.2}{1.6}
    \blackgrid{11.2}{12.8}{1.6}
    \blackgrid{14.2}{15.6}{1.6}
    \blackgrid{18.4}{26.8}{1.6}
    
    \blackgrid{0}{21.2}{0}
    \blackgrid{24}{26.8}{0}
    
    \LABEL{17}{1.6}{$Z$}
    \LABEL{13.4}{1.6}{$\cdots$}
    \LABEL{13.4}{3.2}{$\cdots$}
    \LABEL{4.3}{4.8}{$Z$}
    \LABEL{9.8}{3.2}{$Z$}
    \LABEL{22.6}{0}{$Z$}
    
    \LABEL{-1.6}{6.4}{$Z$}
    \LABEL{-1.6}{4.8}{$I$}
    \LABEL{-1.6}{3.2}{$I$}
    \LABEL{-1.6}{1.6}{$I$}
    \LABEL{-1.6}{0}{$I$}
    
    \LABEL{28.2}{6.4}{$I$}
    \LABEL{28.2}{4.8}{$I$}
    \LABEL{28.2}{3.2}{$I$}
    \LABEL{28.2}{1.6}{$I$}
    \LABEL{28.2}{0}{$Z$}
    
    \GENERICBLUE{1.4}{6.4}{4.8}{$D$}
    \GENERICBLUE{7}{4.8}{3.2}{$D$}
    \GENERICBLUE{19.8}{1.6}{0}{$D$}
    \GENERICPINK{25.4}{0}{0}{$E$}
    
\end{qcircuit}
\]
\end{proof}
\noindent\textbf{Proposition 4.14}
  Let $P$ and $Q$ be Pauli operators such that $P^2=Q^2=I,$ $P,Q \neq
  \pm I$, and $P$ and $Q$ anticommute. Then there exists a unique pair
  of a $Z$-circuit $L$ and a $X$-circuit $M$ such that
  \begin{align*}
  ML \bullet P = I \otimes \cdots \otimes I \otimes Z & &
  and & &
  ML \bullet Q = I \otimes \cdots \otimes I \otimes X
  \end{align*}

\begin{proof}
By \textbf{Proposition 4.11}, there is a unique $Z$-circuit $L$ such that $L
\bullet P = Z \otimes I \otimes \cdots \otimes I$. Since $P$ and $Q$
both square to the identity and anticommute, so do $L \bullet P$ and
$L \bullet Q$. Thus by \textbf{Proposition 4.12}, there exists a unique
$X$-circuit $M$ such that $M \bullet (L \bullet Q) = ML \bullet Q = I
\otimes I \otimes \cdots \otimes X$ and, by \textbf{Proposition 4.13}, $ML \bullet P
= M \bullet (L \bullet P) = M \bullet (Z \otimes I \otimes \cdots
\otimes I) = I \otimes I \otimes \cdots \otimes Z$. This proves
existence. For uniqueness, we assume that $L'$ and $M'$ are two other
circuits satisfying the conditions of the proposition. Since $M'L'
\bullet P = I \otimes \cdots \otimes I \otimes Z$, and $M' \bullet (Z
\otimes I \otimes \cdots \otimes I) = I \otimes \cdots \otimes I
\otimes Z$, we can deduce that $L' \bullet P = Z \otimes I \otimes
\cdots \otimes I$. Therefore $L' = L$ by the uniqueness of
\textbf{Proposition 4.11}, and since $M' \bullet L \bullet Q = M \bullet L
\bullet Q = X \otimes I \otimes \cdots \otimes I$, we have that $M' =
M$ by the uniqueness of \textbf{Proposition 4.12}.
\end{proof}
\noindent\textbf{Proposition 4.15}
  Let $\phi : \pauli{n} \rightarrow \pauli{n}$ be an automorphism of
  the Pauli group. Then there exists a normal circuit $C$ such that
  for all $P$, $C \bullet P = \phi(P)$. Moreover, the normal form $C$
  is unique up to a scalar $\pm 1$.

\begin{proof}
We proceed by induction on $n$. When $n=0$, the Pauli operators are
the scalars $\pm 1$. Thus in this case $\phi$ is the
identity. Choosing $C = 1$, we get $C \bullet P = \phi(P)$. Uniqueness
up to scalar follows from the fact that when $n=0$, the Clifford
operators are the scalars $\pm 1$. Now suppose that our claim is true
for $n-1$ and consider the case of $n$. First we will prove existence.
Let $P = \phi^{-1}(I \otimes \ldots \otimes I \otimes Z)$ and $Q =
\phi^{-1}(I \otimes \ldots \otimes I \otimes X)$. Then $PQ =
\phi^{-1}(I \otimes \ldots \otimes I \otimes ZX)$. Since $I \otimes
\ldots \otimes I \otimes Z$ and $I \otimes \ldots \otimes I \otimes X$
anticommute, so do $P$ and $Q$. By\textbf{Proposition 4.14}, there exists a
unique $X$-circuit $M$ and a unique $Z$-circuit $L$ such that $ ML
\bullet P = I \otimes \ldots \otimes I \otimes Z = \phi(P)$ and $ML
\bullet Q = I \otimes \ldots \otimes I \otimes X = \phi(Q)$. We now
define a new automorphism $\phi ': \pauli{n} \rightarrow \pauli{n}$ by
\[
  \phi '(U) = \phi((ML)^{-1} \bullet U)
\]
for all $n$-qubit Pauli operators $U$. Note that $I \otimes \cdots
\otimes I \otimes Z$, $I \otimes \cdots \otimes I \otimes X$ and $I
\otimes \dots \otimes I \otimes ZX$ are all fixed points of $\phi'$,
since
\begin{align*}
\phi '(I \otimes \dots \otimes I \otimes Z) &= \phi((ML)^{-1} \bullet
(I \otimes \dots \otimes I \otimes Z) \\ &= \phi((ML)^{-1} \bullet
(ML) \bullet P) = \phi(P) = I \otimes \dots \otimes I \otimes Z
\end{align*}
and similarly for $I \otimes \dots \otimes I \otimes X$. Now let $R$
be an ($n-1$)-qubit Pauli operator. Since $R \otimes I$ commutes with
$I \otimes \dots \otimes I \otimes Z$ and $I \otimes \dots \otimes I
\otimes X$, the same is true of $\phi '(R \otimes I)$. Hence $\phi '(R
\otimes I) = V \otimes I$ for some $V \in \pauli{n-1}$.  It follows
that there exists an automorphism $\phi '': \pauli{n-1} \rightarrow
\pauli{n-1}$ such that, for every $R \in \pauli{n-1}$, $\phi '(R
\otimes I) = \phi ''(R) \otimes I$. Since $I \otimes \dots \otimes I
\otimes Z$ and $I \otimes \dots \otimes I \otimes X$ are fixed points
of $\phi '$, we then have $\phi ' = \phi '' \otimes I$.

By the induction hypothesis, there exists a normal $n-1$ qubit
Clifford circuit $C'$ such that for all $R \in \pauli{n-1}$, $C'
\bullet R = \phi ''(R)$. Let $C = (C' \otimes I)ML$. Since $ML \bullet
U = (\phi ')^{-1}(\phi(U))$, we see that
\[
  C \bullet U = (C' \otimes I)ML \bullet U = (C' \otimes I) \bullet
  ((\phi ')^{-1}(\phi(U)) = (C' \otimes I) \bullet ((\phi '')^{-1}
  \otimes I)(\phi(U)) = \phi(U)
\]
This proves existence.\\

To prove uniqueness, suppose that $D$ is another Clifford circuit in
normal form such that $D \bullet U = \phi(U)$ for all $U \in
\pauli{n}$.  By the definition of normal form, $D = (D' \otimes
I)M'L'$ where $M'$ is an $X$-circuit, $L'$ is a $Z$-circuit, and $D'$
is a normal Clifford circuit on $n-1$ qubits. Since $(D' \otimes
I)M'L' \bullet P = D \bullet P = \phi(P) = I \otimes \dots \otimes I
\otimes Z$, we have
\[
  M'L' \bullet P= (D' \otimes I)^{-1}(I \otimes \dots \otimes I
  \otimes Z) = I \otimes \dots \otimes I \otimes Z.
\]
From the uniqueness of \textbf{Proposition 4.14}, $M'=M$ and $L'=L$. Then,
by the induction hypothesis, $C'$ and $D'$ are equal up to a scalar of
$\pm 1$. Thus the same is true of $C$ and $D$. This proves uniqueness.
\end{proof}

\section{Proof of Corollary 4.16}
\label{app:cliffsize}

\textbf{Corollary 4.16}
  There are exactly $2 \cdot \prod_{i=1}^{n} (4^{i} + 2^{i} -2)(2
  \cdot 4^{i-1})$ real stabilizer operators on $n$ qubits.

\begin{proof}
First note that by \textbf{Definition 4.8}, the $A$ gate on the left of a
normal form will determine the input type of the first possible $B$
gate. Then the choice of each $B$ gate is dependent of the output type
of the previous gate.

There are four gates with a single input type, $B_{1}$, $B_2$, $B_3$,
and $B_{4}$, and four gates with a double input type, $B_{5}$, $B_6$,
$B_7$, and $B_{8}$. The gates $B_{1}$, $B_2$, $B_{3}$, $B_{5}$, $B_6$,
and $B_{7}$ have the output type of the top wire matching that of the
input type of the bottom wire. The gates $B_{4}$ and $B_{8}$ on the
other hand, swap between double and single types. Thus, if the last
chosen gate had a single output wire type, then we must choose one of
$B_{1}$, $B_2$, $B_3$, or $B_{4}$. Similarly, one of $B_{5}$, $B_6$,
$B_7$, or $B_{8}$ must be chosen if the previous gate had a double
output wire type.

Now to end with a circuit that is normal, the top output wire of the
last $B$ gate must be green. This means that if we start with an
$A_{1}$ gate or an $A_{2}$ gate, then we start on a green wire and we
must choose evenly many type-swapping gates ($B_{4}$ and $B_{8}$) in
our construction. Moreover, the first one of which must be a $B_{4}$
gate and the last one of which must be a $B_{8}$. If we start with an
$A_{3}$ gate, then we start on a double wire type and we must choose
oddly many type-swapping gates, the first one of which must be a
$B_{8}$ gate, and the last one of which must be a $B_{4}$ gate.

In general, the number of $Z$-circuits starting with an $A_{1}$ gate
or an $A_{2}$ gate is exactly
\[
  4 \cdot \sum\limits_{m=1}^{n} \sum\limits_{k=0}^{\lfloor
    \frac{m-1}{2} \rfloor} \binom{m-1}{2k} 3^{m-2k-1} =
  \sum\limits_{m=1}^{n} 2^{m-1}(2^{m}+2)
\]
and the number of $Z$-circuits that starting with an $A_{3}$ gate is
exactly
\[
  2 \cdot \sum\limits_{m=1}^{n} \sum\limits_{k=0}^{\lfloor
    \frac{m-1}{2} \rfloor} \binom{m-1}{2k+1} 3^{m-2(k+1)} =
  \sum\limits_{m=1}^{n} 2^{m-2}(2^{m}-2).
\] 
This produces a total of
\[
  \sum\limits_{m=1}^{n} 2^{m-1}(2^{m}+2) + \sum\limits_{m=1}^{n}
  2^{m-2}(2^{m}-2) = \sum\limits_{m=1}^{n} (2^{m-1}(2^{m}+2) +
  2^{m-2}(2^{m}-2)) = 4^{n} + 2^{n} - 2
\]
$Z$-circuits. By \textbf{Definition 4.9}, there are exactly $2 \cdot
4^{n-1}$ $X$-circuits on $n$ qubits. Since there are exactly 2
scalars, by \textbf{Definition 4.10}, there are exactly
\[
  2 \cdot \prod_{i=1}^{n} (4^{i}+2^{i}-2)(2\cdot4^{i-1})
\]
normal circuits. By Proposition \ref{autosimulate}, these are in
bijection with the elements of the $n$-qubit Clifford group.
\end{proof}
\end{document}